\newcommand{\cmark}{\ding{51}}%
\newcommand{\xmark}{\ding{55}}%
\newcommand{\tb}[1]{\textcolor{black}{#1}}
\newtheorem{theorem}{Theorem}[section]
\newtheorem{corollary}[theorem]{Corollary}
\newtheorem{lemma}[theorem]{Lemma}
\newcommand{\R}{\mathbbm{R}}
\newcommand{\qm}[1]{``#1''}
\newcommand{\ie}{i.e.\ }
\newcommand{\cf}{cf.\ }
\newcommand{\eg}{e.g.\ }
\newcommand{\wrt}{w.r.t.\ }
\DeclarePairedDelimiter{\floor}{\lfloor}{\rfloor}
\title{Enhancing Quantum Algorithms for Quadratic Unconstrained Binary Optimization via Integer Programming}
\author{		
	Friedrich Wagner
	\thanks{Fraunhofer Institute for Integrated Circuits IIS, Erlangen, Germany,\\ Department of Data Science, University of Erlangen-Nuremberg, Germany,
	\texttt{friedrich.wagner@iis.fraunhofer.de}}
	\and
	Jonas Nüßlein
	\thanks{Department of Data Science, University of Erlangen-Nuremberg, Germany}\\
	\and
	Frauke Liers
	\thanks{Department of Data Science, University of Erlangen-Nuremberg, Germany}
}
\begin{document}
	
	\maketitle
	\begin{abstract}
To date, research in quantum computation promises potential for outperforming classical heuristics in combinatorial optimization.
However, when aiming at provable optimality,
one has to rely on classical exact methods like integer programming.
State-of-the-art integer programming algorithms can compute strong relaxation bounds
even for hard instances,
but may have to enumerate a large number of subproblems for determining an optimum solution.
If the potential of quantum computing realizes, it can be
expected that in particular finding high-quality solutions for hard problems can be done fast.
Still, near-future quantum hardware
considerably limits the size of treatable problems.
In this work, we go one step into integrating the potentials of quantum and classical techniques for combinatorial optimization.
We propose a hybrid heuristic for the weighted maximum-cut problem
or, equivalently, for quadratic unconstrained binary optimization.
The heuristic employs a linear programming relaxation,
rendering it well-suited for integration into exact branch-and-cut algorithms.
For large instances, we reduce the problem size
according to a linear relaxation such that the
reduced problem can be handled by quantum machines of limited size.
Moreover, we improve the applicability of QAOA, a parameterized quantum
algorithm, by deriving optimal parameters for
special instances which motivates a parameter estimate for
arbitrary instances.
We present numerous computational results from real quantum
hardware.

\end{abstract}

	\textbf{Keywords: }Integer Programming, Combinatorial Optimization, Quantum Computation
	\section{Introduction}\label{sec:intro}
Mixed-integer programming looks back
upon a long history of successfully developing methods and algorithms.
Although being NP-hard, algorithmic enhancements have lead to the
result that even difficult instances
can be solved to global optimality, often quickly, by modern
algorithms and implementations,~\tb{\cite{Bixby2000,Bixby2002,Koch2022}.}
Nonetheless, there exist practically relevant instances which
exceed the capabilities of state-of-the-art solvers~\tb{\cite{Gleixner2021}}.
On the other hand, recent progress in quantum computation promises
fast, high-quality heuristics for such problems~\tb{\cite{Moll_2018,Abbas2023}}.
\tb{Important examples for such heuristics are quantum annealing~\cite{Tameem2018,McGeoch2023}
	 and the quantum approximate optimization algorithm (QAOA)~\cite{Farhi_2014,Harrigan_2021}}. 
Although it is unclear whether this promise will realize
and near-future quantum hardware \tb{strongly} limits the size of treatable problems~\tb{\cite{Guerreschi2019,JuengerLobe_2021,lykov2022}},
quantum computation is a highly relevant topic that is currently
studied in many research groups. 

In this work, we take a step towards combining the potential of
quantum and classical computation, with the goal of enhancing the
applicability of quantum algorithms such that larger instances can be solved.
We propose a hybrid, heuristic algorithm for the weighted maximum-cut problem (MaxCut).
The algorithm builds on a well-known linear relaxation of MaxCut.
Solving linear relaxations is the basis of branch-and-cut algorithms for integer programming
since they provide upper bounds on an optimum solution value (we consider here the case of a maximization problem).
\tb{Branch-and-cut divides the solution space into subproblems, called \emph{branching},
	 and calculates upper bounds on the best solution value in each subproblem by solving a linear relaxation.
If this upper bound undercuts the best known lower bound, typically given by the value of the best available solution, the subproblem can be pruned, \ie excluded entirely from the search.
Otherwise, the algorithm continues branching.}
Linear relaxations can be \tb{often} solved quickly, even for very large instances~\tb{\cite{Koch2022}}. 
Thus, the proposed heuristic is particularly well-suited for integration in integer programming methods.
Integrating fast and high-quality heuristics in classical branch-and-cut is a commonly used technique to speed
up the solution process~\tb{\cite{Fischetti2011}}.
Such heuristics provide quickly accessible lower bounds.
If there is a gap between upper and lower bound, the algorithm needs to branch and starts enumerating sub-problems.
The need to enumerate a large number of sub-problems often slows down the solution process significantly.
Branch-and-cut performs particularly well when upper and lower bounds
are strong as branching can then be avoided as much as possible.
Heuristics based on linear relaxation solutions are implemented in all state-of-the-art solvers for integer programming.
Typically, they perform an elaborate rounding procedure to derive feasible integer solutions~\tb{\cite{Achterberg2012}}.
In this work, we solve the so-called \emph{cycle relaxation} of MaxCut,
which has proven valuable for rounding heuristics in previous works~\tb{\cite{Barahona_1988,Liers_2003,Bonato_2014}}.
However, we emphasize that \emph{any} relaxation can readily be substituted in our method.
The proposed heuristics fixes variables based on rounding a cycle relaxation solution.
Together with careful algorithm engineering, this allows to shrink the problem size to a specified target value
such that it can be handled by quantum hardware of limited size.
Thus, instead of relying on branching to determine an integer solution, the enumerative part is heuristically solved by quantum computation.
Finally, having retrieved a solution to the reduced problem from the quantum algorithm,
a solution to the original MaxCut instance is recovered by undoing the shrinking operations appropriately.
This procedure combines the advantages of integer programming, \ie the ability to solve large relaxations,
with the advantages of quantum computation, \ie the ability to determine high-quality solutions quickly.

However, current quantum algorithms for combinatorial optimization are typically parameterized and parameters need to be tuned in a feedback-loop
in order to retrieve satisfactory results.
This heavily increases runtime which would render such quantum heuristics impractical for integration into integer programming solvers.
To tackle this issue, we develop a method to derive good parameters efficiently for QAOA in the case of weighted MaxCut,
speeding up the quantum part of the proposed method \tb{by several orders of magnitude}.
The proposed method benefits from decades of developing linear programming techniques for MaxCut
which we briefly summarize here.

\paragraph*{Linear programming and MaxCut.}
Given a weighted graph, MaxCut
asks for a partition of the nodes such that the weight of connecting edges is maximized.
The decision version of MaxCut is one of Karp's 21 NP-complete problems \cite{Karp_1972}.
Thus, polynomial time algorithms are known for special cases only,
\eg for planar graphs \cite{Hadlock_1975,Liers_2012}, for graphs without long odd cycles \cite{Groetschel_1984} and for graphs with no $K_5$ minor \cite{Barahona_1986}.
The work of Barahona and Mahjoub \cite{Barahona_1986} paved the way
for a long series of successful applications of linear programming (LP) methods for MaxCut.
For a comprehensive book, we refer the interested reader to \cite{Deza_1997}.
A major reason for this success is the development of sophisticated
techniques that allow the solution of
large LP relaxations of MaxCut, \cf \eg \cite{Barahona_1989,Bonato_2014,Juenger_2019,Juenger_2021}.
From a theoretical point of view, these linear relaxations have a rather large worst-case integrality gap of 2 \cite{Avis2003,Vega2007}.
\footnote{Here, the integrality gap of a relaxation is defined as the relaxation optimum value divided by an optimum integer solution value.}
In practice, however, they are typically much stronger, \ie they yield a tight upper bound on the optimum cut value,
making them valuable for branch-and-cut algorithms, \cf \cite{Barahona_1988,Barahona_1989,Liers_2003,Rehfeldt_2022,Charfreitag_2022}.
For LP based methods, it is well-known that the difficulty of solving
MaxCut instances to global optimality strongly depends on the
density of the underlying graph.
Indeed, dense instances in the order
of 100 vertices can already go beyond what is solvable in practice by
modern algorithms. However, if instances are sparse and possibly
defined on regular structures such as grid graphs, or if the weights
are specifically designed, state-of-the-art methods can handle problems with over 10,000
vertices in less than a minute, \cf \cite{Charfreitag_2022}.

Another widespread relaxation of MaxCut is based on semi-definite programming (SDP).
It was first introduced by Goemans and Williamson \cite{Goemans1995}
who developed a famous $\alpha$-approximation algorithm for MaxCut with $\alpha\approx0.878$.
Indeed, it was shown later that $\alpha$ is the best possible approximation factor for a polynomial-time algorithm if the Unique Games Conjecture holds \cite{Khot2007}.
Moreover, the integrality gap of the SDP-relaxation is precisely $1/\alpha\approx1.14$ \cite{Feige2002}, which is considerably smaller than the integrality gap of LP relaxations.
In practice, SDP based exact algorithms perform well even on dense instances, contrary to LP-methods.
However, they scale (much) worse than LP-methods \wrt to the number of vertices, \cf \cite{Rendl_2007,Rendl2010,Palagi2012}.
Noteworthy, strong relaxations of MaxCut have proven useful even for problems with additional constraints \cite{Buchheim_2010}.

\paragraph*{Quantum computation and MaxCut.}
Significant technological progress in quantum computation (QC) hardware has been achieved in the last decade,
improving both digital (\cf \eg \cite{Arute_2019,Hyypp_2022}) and analog quantum computers (\cf \eg \cite{Hauke_2020,Dwave_2021}).
The availability of quantum hardware has driven researchers to seek for practical quantum advantage in various fields of applications.
Among those, combinatorial optimization plays a prominent role (\cf \eg \cite{Abbas2023}), although clear advantages have not yet been shown \tb{\cite{Ronnow2014,JuengerLobe_2021}}.
Here, commonly used algorithms are the QAOA in digital QC (\cf \cite{Farhi_2014}),
and quantum annealing in analog QC (\cf \cite{Hauke_2020}).
Although from a theoretical point of view, both algorithms are applicable to a broad range of combinatorial optimization problems,
research is mainly focused on quadratic unconstrained binary optimization (QUBO).
\tb{QUBO and MaxCut are equivalent problems as there
	exists a linear transformation between them.
	In particular, any QUBO problem on $n$ variables can be transformed into an equivalent MaxCut instance on $n+1$ vertices (\cf \cite{Hammer_1965, Barahona_1989,DESIMONE_1990}).
	We detail this transformation in~Sec.~\ref{sec:prelim}.}
The reason for the prominence of QUBO in QC is that many current quantum hardware platforms, both digital and analog, have a natural connection to QUBO.
\footnote{In the physics literature, usually the term \qm{Ising model} is used instead of QUBO.}
Therefore, the usual approach is to model the problem of interest as a QUBO problem \cite{Lucas_2014}.
Moreover, technical restrictions of current hardware limit the size of treatable QUBO problems.
The maximum treatable problem size depends \tb{not only on the hardware size but also heavily} on the problem density. 
\tb{Here, analog QC encounters significant overhead in quantum resources when embedding dense problem graphs onto sparse hardware graphs~\cite{Choi2008,Choi2010}.
Similarly, in digital QC, qubit routing adds additional quantum overhead for dense problems~\cite{WBLW_2023}.
In digital QC, also error correction~\cite{nielsen_chuang_2010} or error mitigation~\cite{van_den_Berg_2022,Temme_2017} require additional resources.}
It is reasonable to expect that, even if QC achieves clear experimental advantages,
such restrictions due to quantum hardware persist.
Therefore, algorithmic size-reduction techniques will be necessary to leverage the full potential of QC.

\paragraph*{Our contribution.}
In this work, we present a novel hybrid quantum-classical heuristic for MaxCut which
can easily be integrated into modern integer programming solvers.
The proposed algorithm overcomes the limitations of near-future quantum hardware
by employing classical size-reduction techniques.
\tb{In the context of classical algorithms, our main contribution is the combination of graph shrinking with linear programming.}
This allows to combine the advantages of both quantum and classical methods for combinatorial optimization:
The former \tb{can be expected to} quickly produce high-quality solutions for hard instances \tb{(\cf \eg~\cite{Bravyi_2020,Farhi_2022,Basso2022})} but the size of treatable problems is limited.
The latter can efficiently compute strong relaxations, \tb{in particular on sparse} instances of large size~\cite{Charfreitag_2022}.

Additionally, we improve the applicability of the quantum algorithm used in our experiments -- QAOA for weighted MaxCut --
by deriving optimal parameters for triangle-free, regular graphs with weights following a binary distribution.
\tb{To this end, we derive an alternate formula for the expectation value of depth-1 QAOA applied to weighted MaxCut by extending the work of~\cite{Wang_2018}.
	This formula is less expensive to evaluate numerically than the formula derived in Refs.~\cite{Ozaeta_2022, Bravyi_2020} on sparse instances.}
Furthermore, this leads a new estimate for good
parameter values for arbitrary instances.
As a result, the runtime of the quantum algorithm is reduced,
which is especially beneficial when integrating it into classical branch-and-cut.

We present various experimental results from quantum simulators and real quantum hardware, showing
the applicability of the proposed method on example instances
inspired by spin glass physics. 
The quantum processor in our experiments has 27 qubits of which we use at most 10 due to strong noise.
This limits the size of the reduced problem to a maximum of 10 vertices.
Due to the hybrid algorithm that first reduces the instance size \tb{classically},
we solve instances with up to 100 vertices within 60~s of total computation time, 
\tb{including classical runtime and runtime on real quantum hardware}.
For \tb{standard QAOA without classical size reduction}, these instances exceed
the currently available resources.
We emphasize, that all instances considered in this work can be solved in reasonable time
by purely classical methods.
Owing to \tb{the limited size of the quantum hardware used in this work}, larger instances would not give deeper insights
in the quantum part since the algorithm outcome would mainly be determined by the classical part.
Our experiments thus yield a proof-of-principle,
encouraging the application of the proposed method when quantum hardware advances.

\paragraph*{Structure.} The remainder of this paper is organized as follows:
\tb{First, we give an overview of and discuss connections to related literature in Section~\ref{sec:rel}.}
In Section~\ref{sec:prelim}, we formally introduce the MaxCut and QUBO problem as well as QAOA.
Section~\ref{sec:algo} describes our algorithm in detail.
In Section~\ref{sec:param}, we derive the novel result on optimum QAOA-parameters.
Section~\ref{sec:exp} presents various experimental results.
We conclude with a summary and indicate further directions of research in Section~\ref{sec:concl}.

\section{Related work}\label{sec:rel}
In the context of QAOA, a similar size-reduction procedure, \tb{known as \emph{recursive QAOA} (RQAOA)}, has been proposed in \cite{Bravyi_2020} \tb{and was further developed in~\cite{Bravyi_2022,finzgar2023}}.
In those works, however, the motivation is quite different.
The authors reduce large-scale problems by iteratively solving them approximately by QC
in order to reach regimes where classical exact methods can be applied.
While their size-reduction technique is similar to ours, they use QC to determine the next reduction step.
Thus, for practical applicability, large-scale quantum hardware would be required.
\tb{This is also the case for the algorithm proposed in~\cite{Dupont_2024} which employs variable correlations generated by QAOA in order to obtain valid QUBO solutions via a rounding procedure.}

The authors of \cite{Tate_2023} \tb{and \cite{Egger2021}} combine classical semi-definite relaxations of MaxCut with QAOA.
They employ relaxation solutions for warm-starting QAOA in order to achieve better results.
On the contrary, in this work, relaxation solutions are used to reduce problem size.
Furthermore, in the context of QA, size-reduction was successfully applied in \cite{DwaveHybrid_2023}.
There, the problem size is reduced -- similarly to this work -- by fixing a set of variables.
The authors repetitively choose a random set of variables to be fixed and try to improve an overall solution by solving the reduced problem via QA.
This is in contrast to this work, where the set of variables to be fixed is determined by a relaxation solution.

The performance of QAOA for MaxCut was studied by the authors of~\cite{lykov2022} \tb{and~\cite{Guerreschi2019}}.
Their results indicate a high threshold for QAOA to achieve an advantage over classical heuristics.
However, their study is restricted to unweighted 3-regular graphs, whereas our work deals with arbitrary, weighted graphs.

\tb{The authors of \cite{Ozaeta_2022} derive an analytical expressions for the expectation value produced by depth-1 QAOA on general Ising models, which includes weighted MaxCut.
	The same analytical expression for weighted MaxCut was already obtained in~\cite{Bravyi_2020}.
	In this work, we derive a different formula for weighted MaxCut by extending a result on unweighted MaxCut from~\cite{Wang_2018} and~\cite{hadfield2018quantum}.
	The resulting, alternative formula is cheaper to evaluate
    than the formula from~\cite{Ozaeta_2022}
    and~\cite{Bravyi_2020} for sparse instances.}

\tb{Moreover, the authors of~\cite{Ozaeta_2022} derive parameters which are optimal on regular graphs in the ensemble average for normally distributed weights.
Similarly, Ref.~\cite{Farhi_2022} gives an explicit formula for the ensemble average of the expectation value for arbitrary-depth QAOA applied to the Sherrington-Kirkpatrick model of infinite size.
Therein, the authors derive optimal parameters for the depth-1 case.
In contrast, in this work we derive (approximate) optimal parameters for a given instance rather than averaging over an instance ensemble.}
Furthermore, deducing high-quality QAOA-parameters for weighted MaxCut was studied in \cite{Shaydulin_2023}.
There, the authors successfully transfer parameters which are known to be well-suited for unweighted MaxCut to the weighted case.
On the contrary, the method for QAOA-parameter estimation proposed in this work is purely based on characteristics of the weighted MaxCut instance at hand.

Fixing variables in order to reduce the problem size is an established pre-processing technique in integer programming algorithms.
Here, one seeks for variable assignments that are provably part of an optimum solution since then solving the reduced problem will recover an optimum solution to the original instance.
For MaxCut, such methods are developed in \eg \cite{Lange_2018,Ferizovic2020,Rehfeldt_2022}.
Clearly, an advantage of such techniques is that it allows to reduce the problem size without loosing the possibility to retrieve an optimum solution.
On the other hand, they are computational\tb{ly} more expensive than rounding relaxation solutions and reaching a specific target size of the reduced problem cannot be ensured.
This, however, is crucial in the context of this work since quantum hardware strictly limits the size of treatable problems.

Considering heuristics for integer programming, rounding variables according to a relaxation solution is a commonly used paradigm.
In the context of MaxCut, it was already applied in the early work of Barahona et al.~\cite{Barahona_1988}
and, more recently, \eg in~\cite{Liers_2003,Bonato_2014}.
In those works, a set of variables corresponding to a spanning tree is rounded, as this uniquely defines a cut.
The spanning tree is chosen such that its variables are as close to integer as possible.
Computationally, this can be implemented efficiently by a minimum-weight spanning-tree algorithm.
In this work, however, only a subset of variables (which corresponds to a forest) is fixed by rounding,
leaving the remaining variables free to be optimized by the quantum algorithm.
Furthermore, shrinking was successfully applied to other closely
related combinatorial optimization problems in~\cite{Groetschel_1984,Liers_2011}.

	\section{Preliminaries}\label{sec:prelim}
\tb{Before stating our results, we introduce the prerequisites necessary for the upcoming sections.
First, we formally define the MaxCut problem and introduce its integer programming model.
Then, we define the QUBO problem and recap the transformation between QUBO and MaxCut.
Finally, we introduce QAOA and its application to MaxCut.}

\paragraph*{MaxCut and QUBO} As stated earlier, MaxCut and QUBO 
are well known to be equivalent problems.
For ease of presentation, we introduce the maximum cut problem first as this is
natural for integer programming methods.
Given an undirected graph $G=(V,E)$ and a vertex subset $W\subseteq V$,
the edge set $\delta(W) \coloneqq \{uv \in E\mid u\in W, v\not \in W\}$ is called a \emph{cut} of $G$.
For edge weights $w \in\R^{|E|}$, the weight of a cut $\delta(W)$ is defined as $\sum_{e \in \delta(W)}w_e$.
The MaxCut problem asks for a cut of maximum weight.
\begin{figure}
	\centering
	\subfloat{\includegraphics[height=2.5cm]{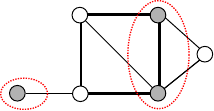}\label{fig:cut}}\hspace{2cm}
	\caption{\tb{Example for a cut. A vertex subset $W$ is marked in gray. All edges crossed by the red dashed lines are in the cut $\delta(W)$. An exemplary cycle of length four is marked by thick edges.
			 It intersects with the red dashed line two times. In general, any cycle intersects with $\delta(W)$ an even number of times}.
	}
	\label{fig:cut}
\end{figure}
An edge subset $C = \set{v_0v_1,v_1v_2,\dots,\tb{v_{k-1}v_k}, v_kv_0} \subseteq E$ is called a \emph{cycle}.
Clearly, a cut and a cycle \tb{always} coincide in an even number of edges as illustrated in Fig.~\ref{fig:cut}.
Algebraically, this observation can be modeled by the so-called \emph{odd-cycle inequalities}~\tb{\cite{Barahona_1986}}.
If $C$ is a cycle and $x\in \set{0,1}^{|E|}$ is the edge incidence-vector of a cut, it holds
\begin{align}
	\sum_{e \in Q}x_e-\sum_{e \in C\setminus Q}x_e\leq |Q|-1 \quad\forall Q \subseteq C,\ |Q|\ \mathrm{odd}\,.
\end{align}
In fact, the odd-cycle inequalities for all cycles $C$ are sufficient
to define a cut.
Thus, a widely used integer linear programming formulation (\cf \eg \cite{Barahona_1988,Barahona_1989,Juenger_2019,Rehfeldt_2022}) of MaxCut is
\begin{subequations}
	\begin{align} 
		\max_{\tb{x}}\ &\sum_{e \in E}w_e x_e\label{eq:Objective}\\
		\text{s.t. }&\sum_{e \in Q}x_e-\sum_{e \in C\setminus Q}x_e\leq |Q|-1 \quad\forall Q \subseteq C,\ |Q|\ \mathrm{odd},\ \forall C \subseteq E\ \mathrm{cycle} \label{eq:oddcycle}\\
		& 0\leq x_e\leq 1 \quad\forall e \in E \label{eq:bounds}\\
		& x_e \in \{0,1\} \quad\forall e \in E \label{eq:bin}\ .
	\end{align}
\end{subequations}
The \emph{cut polytope}, first introduced in \cite{Barahona_1986}, is the convex hull of all cut incidence-vectors,
\begin{align}
	P_{\mathrm{CUT}} \coloneqq \mathrm{conv}\{x \in \R^{|E|}|\eqref{eq:oddcycle}-\eqref{eq:bin}\}\ .
\end{align}
Dropping the integrality condition \eqref{eq:bin},
the model \eqref{eq:Objective}-\eqref{eq:bounds} is called the \emph{cycle relaxation} of MaxCut.
In general, a solution to the cycle relaxation yields an upper
bound on the optimum cut value.
However, it is known that the cycle relaxation has integer optimal solutions for all weights $w_e\in \R^{|E|}$ if and only if $G$ has no $K_5$ minor, \cf \cite{Barahona_1986}.
Although the MaxCut problem is NP-hard,
optimizing the cycle relaxation can be done in polynomial time via \emph{odd-cycle separation}.
First, only \eqref{eq:Objective} and \eqref{eq:bounds} is optimized.
Given an optimal solution $x^*$, the odd-cycle separation algorithms decides whether
$x^*$ satisfies all odd-cycle inequalities.
If not, it returns a violated one which is added to the model and the procedure is repeated.
Otherwise, $x^*$ is optimal for the cycle relaxation.
Barahona and Majhoub \cite{Barahona_1989} give a polynomial-time algorithm for odd-cycle separation 
based on shortest paths.
Typically, one seeks for odd-cycle inequalities that belong to \emph{chordless}
cycles as they define facets of the cut
polytope, \cf \cite{Barahona_1986,Juenger_2019}.
For complete graphs, the odd-cycle inequalities take the form
\begin{subequations}
	\begin{align} 
		x_{tu}-x_{uv} -x_{vt} 	&\leq 0  \label{eq:tr1}\\
		-x_{tu}+x_{uv}-x_{vt} 	&\leq 0  \label{eq:tr2}\\
		-x_{tu}-x_{uv}+x_{vt} 	&\leq 0  \label{eq:tr3}\\
		x_{tu} +  x_{uv} + x_{vt}&\leq 2  \label{eq:tr4}
	\end{align}
\end{subequations}
\noeqref{eq:tr2,eq:tr3}
for all triangles $\set{t,u,v}$ in $G$.
A key ingredient of our algorithm is, that a solution to the cycle relaxation
can be computed efficiently. 
We thus use an optimum cycle relaxation solution to reduce the size of the
MaxCut instance such that it can be handled by near-term quantum computers.
\tb{To this end, we identify an edge whose corresponding value of the relaxation solution is close to integer.
Then, the two incident vertices are replaced by
a single vertex.
We refer to such a new vertex originating from replacing two vertices as a \emph{super-vertex}.
We repeat this vertex replacement until a target size is reached.}
This process is called
\emph{shrinking}, \cf~\cite{Liers_2011, Bonato_2014}.
Details are explained in Section~\ref{sec:algo}.

Current research in quantum computation for combinatorial optimization mainly focuses on QUBO problems,
that is, problems of the form
\begin{subequations}
	\begin{align} 
		\max_{\tb{x}\ }&\sum_{i,j}q_{ij}x_ix_j\label{eq:ObjectiveQ}\\ 
		\text{s.t.\ }&x_i \in \{0,1\}\quad \forall i \in \{1,\dots,n\}\ \label{eq:binQ}\,,
	\end{align}
\end{subequations}
where $q_{ij} \in \R$.
A QUBO problem with $n$ variables can be transformed to an equivalent MaxCut problem on $n+1$ vertices, \cf~\cite{Hammer_1965,Barahona_1989,DESIMONE_1990,Juenger_2021}.
\tb{
To this end, we consider the complete graph $K_{n+1}$ with vertices $\{0,1,\dots,n\}$.
For an edge $ ij\in K_{n+1} $ with $i,j > 0$, we define its weight as $w_{ij} = q_{ij} + q_{ji}$.
Moreover, for an edge $i0$ with $i > 0$, we set $w_{i0}=\sum_{j=1}^n  q_{ij} + q_{ji}$.
Then, a cut $\delta(W)$ with weight $W$ gives rise to a QUBO solution with value
$ Q = -W/2 + C $ where $C = 1/4\left(\sum_{e\in E}w_e + 2\sum_i q_{ii} + \sum_{i<j}q_{ij}+q_{ji} \right)$.
The QUBO solution is obtained by setting $x_i = 0$ if $i0$ is in the cut $\delta(W)$ and $x_i = 1$ otherwise.
}

Our size-reduction method works on MaxCut, but we keep in mind that
this MaxCut problem might originate from a transformed QUBO problem.
After size-reduction, we formulate the reduced MaxCut problem as a QUBO problem in order to use QC for its solution.
The natural QUBO formulation of MaxCut is
\begin{subequations}
	\begin{align} 
		\max_{\tb{x}}\ &C(x)=\sum_{uv \in E}w_{uv}(x_{u}+x_{v}-2x_{u}x_{v})\label{eq:Objectivex}\\ 
		\text{s.t. }&x_v \in \{0,1\}\quad \forall v \in {V}\ \label{eq:binx} .
	\end{align}
\end{subequations}
This model is then solved by QAOA and a solution to the original MaxCut (or QUBO) problem is reconstructed.
\noeqref{eq:binx}

\paragraph*{QAOA.}
QAOA is a quantum-classical hybrid algorithm, originally proposed by Fahri et al. in \cite{Farhi_2014}.
Since then, it has received great attention, which led to the development of more sophisticated versions and variants, see \eg \cite{Bravyi_2020,Hadfield_2019}.
QAOA computes approximate solutions of arbitrary, unconstrained, binary
optimization problems defined by a \tb{target} function $C:\{0,1\}^{n} \rightarrow \R$.
The goal is to find an $x^{*}\in\{0,1\}^{n}$ maximizing the \tb{target} function.
QAOA is a parameterized algorithm with real-valued parameters $\boldsymbol{\gamma} =  (\gamma_{1}, ..., \gamma_{p})$ and $\boldsymbol{\beta} = (\beta_{1}, ..., \beta_{p})$.
The hyper-parameter $p$, called \emph{depth}, controls the \tb{computational} complexity of the algorithm.
QAOA prepares the quantum state
\begin{align}
	\ket{\psi(\boldsymbol{\beta},\boldsymbol{\gamma})}=e^{-i\beta_{p}H_{M}}e^{-i\gamma_{p}H_{C}}...e^{-i\beta_{1}H_{M}}e^{-i\gamma_{1}H_{C}} \ket{+}\ .
\end{align}
Here, $H_{C}$ is the problem-specific \emph{phase Hamiltonian}, defined by
\begin{align}\label{eq:phaseh}
	H_{C}\ket{x} = C(x)\ket{x}~\forall x \in \{0,1\}^{n}
\end{align}
and $H_{M}$ is the problem-independent \textit{mixing Hamiltonian}, defined by
\begin{align}
	H_{M}=\sum_{i=1}^{n}X_{i}\ ,
\end{align}
where $X_{i}$ is the Pauli $X$-operator acting on qubit $i$.
The initial state $\ket{+}$ is the uniform superposition over all basis states, that is
\begin{align}
	\ket{+} = \frac{1}{\sqrt{2^{n}}}\sum_{x \in \{0,1\}^{n}}\ket{x}\ .
\end{align}
Performing a single measurement yields a bit string $x$ with probability $P_{\boldsymbol{\gamma},\boldsymbol{\beta}}(x)=|\braket{x|\psi(\boldsymbol{\beta},\boldsymbol{\gamma})}|^2$.
Thus, the expectation value of $C(x)$ evaluates to
\begin{align}\label{eq:qaoa_exp}
	F(\boldsymbol{\beta},\boldsymbol{\gamma}) \coloneqq  \sum_{x \in \{0,1\}^n}P_{\boldsymbol{\gamma},\boldsymbol{\beta}}(x)C(x) = \langle\psi(\boldsymbol{\beta},\boldsymbol{\gamma})\lvert H_{C}\rvert\psi(\boldsymbol{\beta},\boldsymbol{\gamma})\rangle\ .
\end{align}
The distribution $P_{\boldsymbol{\gamma},\boldsymbol{\beta}}(x)$ is parameter-dependent,
and one seeks for parameters that yield high probabilities for high value solutions.
Usually, this is done by classically maximizing the expectation value $	F(\boldsymbol{\gamma},\boldsymbol{\beta})$
which is estimated by an average over a finite sample
\begin{align}\label{eq:qaoa_avg}
	\langle C \rangle (\boldsymbol{\gamma},\boldsymbol{\beta})= \frac{1}{N}\sum_{i=1}^{N} C(x_i)\ ,
\end{align}
where $N$ is the total number of samples and $x_i\in \set{0,1}^n$ is the $i$-th sample.
We remark, that another, less common metric for parameter optimization
is the probability of sampling an optimal solution.
\tb{If the value of an optimum solution is unknown, as is typically the case for MaxCut,}
this metric is only applicable for small instances because it requires \tb{calculating the optimum solution value} beforehand.

In the experiments, we use QAOA for the weighted-MaxCut target-function given in \eqref{eq:Objectivex}.
It is easily verified, that the phase Hamiltonian
\begin{align}\label{eq:maxcuth}
	H_{C} = \sum_{uv \in E}\frac{w_{uv}}{2}(I-Z_{u}Z_{v})
\end{align}
fulfils \eqref{eq:phaseh} for this particular \tb{target} function.
Here, $Z_{u}$ is the Pauli $Z$-operator acting on qubit $u$.
Having introduced the necessary prerequisites, we now describe the proposed quantum-classical algorithm in more detail.

	\section{Algorithm description}\label{sec:algo}

In this section, we describe the proposed hybrid algorithm for MaxCut problems.
It can be divided into four major steps.
First, \emph{correlations} between vertex pairs are computed from an
optimum cycle relaxation solution.
Here, the closeness of a relaxation variable to an integer value
is interpreted as a tendency of the corresponding vertex pair
to lie in equal or opposite partitions in an optimum cut.
Second, the problem size is reduced by imposing correlations,
that is, vertex pairs with large absolute correlations are identified.
Third, the shrunk problem is solved by QAOA.
Finally, a feasible solution to the original problem is reconstructed
by undoing the shrinking operations appropriately.

\paragraph*{Computing Correlations.}
To reduce problem size of an instance too large to be solved by
current quantum hardware, the algorithm relies on correlations.
A correlation between a vertex pair quantifies the tendency of the vertices
pair being in equal or opposite partitions in an optimum cut.
More formally, for a subset $S\subseteq V\times V$ of vertex pairs, correlations are
a set $\set{b_{uv}|uv \in S}$ where $b_{uv}\in [-1,1]$.
Correlations are called \emph{optimal}, if there is an optimum cut $\delta(W)$
such that $b_{uv}\geq0$ ($b_{uv}<0$) if $u$ and $v$ lie in equal (opposite) partitions in $\delta(W)$.
In general, the closeness of $b_{uv}$ to $1$ ($-1$) is interpreted as the tendency of $u$ and $v$ lying in equal (opposite) partitions.

In principle, any method to deduce correlations can be used in the algorithm.
However, we compute correlations from a solution $x^*$ to the cycle relaxation by
\begin{align}\label{eq:cort}
	b_{uv} \coloneqq 1 - 2x^*_{uv}\in [-1,1]\ .
\end{align}
It is well known and can also be seen in our numerical experiments,
that cycle relaxation solutions indeed often resemble correlations from an optimum integer solution.

\paragraph*{Shrinking.}
\begin{figure}
	\centering
	\subfloat[]{\includegraphics[height=2.5cm]{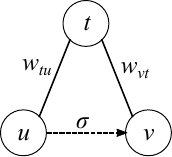}\label{fig:skizze_vertex_indent_vor}}\hspace{2cm}
	\subfloat[]{\includegraphics[height=2.5cm]{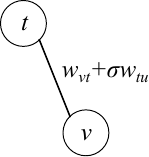}\label{fig:skizze_vertex_indent_nach}}
	\caption{Sketch for vertex shrinking.
		(a): Vertices $u$ and $v$ are to be identified, where
		$\sigma\in \{-1,1\}$ defines whether $u$ and $v$ lie in the equal or opposite partitions.
		Vertex $t$ is a neighbor of $u$.
		(b): In the shrunk MaxCut instance, vertex $v$ is a super-vertex containing $u$ and $v$ with adjusted edge weights.
		In the case where edge $vt$ is not present in (a), it is constructed as shown in (b) with $w_{vt}=0$.
	}
	\label{fig:vertex_ident}
\end{figure}
We reduce problem size by identifying vertex pairs which have a large absolute correlation.
This process is illustrated in Fig.~\ref{fig:vertex_ident}.
First, we describe the process of shrinking a single pair of vertices.
To this end, let $b_{uv}$ be a correlation and define
\begin{align}
	\sigma \coloneqq \begin{cases}
		\mathrm{sign}(b_{uv})&\ b_{uv} \neq 0\\
		1&\ b_{uv} = 0\,.
	\end{cases} 
\end{align}
If $\sigma = 1$ ($\sigma =-1$), we enforce $u$ and $v$ to lie in equal (opposite) partitions.
Solving MaxCut on $G=(V,E)$ with this additional constraint is equivalent to 
solving MaxCut on a graph $G'=(V',E')$, where $V' = V\setminus\{u\}$.
For this reduction, edge weights need to be adjusted.
For $u\in V$, denote by $\mathcal{N}(u)\coloneqq\set{v\in V|uv \in E}$ the neighborhood of $u$.
For all $t \in \mathcal{N}(u)$, define new weights by
\begin{align}\label{eq:nw_def}
	w_{vt}' \coloneqq \begin{cases}
		\sigma w_{tu} &\text{ if }vt \not\in E \\
		w_{vt} + \sigma w_{tu} &\text{ if }vt \in E\ ,
	\end{cases}
\end{align}
compare Fig.~\ref{fig:vertex_ident}.
All other edge weights remain unchanged.
Vertex $v$ now represents a super-vertex containing vertices $u$ and $v$.
Multiple edges are replaced by a single edge.
Thus, the reduced MaxCut instance is defined on $G'=(V',E')$ with
\begin{align}\label{eq:e_def}
E' = \Big(E \cup \{vt:t \in \mathcal{N}(u)\}\Big)\setminus \Big\{ut:t \in \mathcal{N}(u)\Big\}\ ,
\end{align}
and $V' = V\setminus \{u\}$.
Any cut in $G'$ can be translated to a cut in $G$ if $\sigma$ is known.

This shrinking process is iterated until a target problem-size is reached.
We shrink in descending order of the absolute values $|b_{uv}|$.
If two vertices to be shrunk have already been identified to the same super-vertex in a previous iteration,
the shrinking step is skipped.
This avoids possibly contradictory variable fixings.
From a different perspective, we only fix variables corresponding to a forest, \ie a cycle-free subgraph.
For the mapping from a solution of the shrunk problem back to a solution of the original instance,
it is necessary to keep track of the vertex identifications.
In this work, we use QAOA for solving the shrunk problem.
However, any suitable method for this task -- quantum or classical -- can in principle be substituted in our algorithm.

The overall algorithm has three desirable properties which can easily be verified.
First, it returns an optimum solution if shrinking with optimal correlations and if the shrunk problem is solved to optimality.
Second, if shrinking is performed with non-optimal correlations but the shrunk problem is solved to optimality,
the quality of the returned solution will increase when shrinking less vertices.
Third, for a fixed shrunk problem, the quality of the returned solution increases with the solution quality of the shrunk problem.
Having described the classical algorithmic framework,
we now turn to the quantum part.

	\section{QAOA-Parameter Estimate for Weighted MaxCut}\label{sec:param}
We apply depth-1 QAOA to solve the shrunk MaxCut problem.
\tb{We restrict our study to depth $p=1$ for two reasons.
	First, the high noise level in current quantum hardware
    prohibits a meaningful execution of large-depth QAOA in practice.
	This is also the case for the state-of-the-art quantum processor used in our experiments in Section~\ref{sec:exp}.
	Second, for depth-1 QAOA we are able to derive a high-quality parameter estimate which renders parameter optimization via a classical feedback-loop unnecessary,
	thus simplifying the execution of QAOA on quantum hardware significantly.
	However, we emphasize that a successful execution of
    larger-depth QAOA would lead to improved solutions.
	Indeed, our experiments in Section~\ref{sec:exp} reveal that
    depth-1 QAOA often falls short in returning optimum
    solutions. Thus, it is desirable to successfully implement
    large-depth QAOA in the future when quantum hardware improves.}

The solution quality returned by QAOA heavily depends on the parameters $\gamma,\beta$.
Therefore, deriving good parameters is crucial for its success.
The authors of \cite{Wang_2018} give an analytical expression for the expectation value $F(\gamma,\beta)$
as defined in \eqref{eq:qaoa_exp} for depth-1-QAOA applied to unweighted MaxCut.
In this section, we extend their results to efficiently derive good
parameters for depth-1-QAOA, when applied to weighted MaxCut.

The following statements might serve as a starting point for a further classical parameter optimization.
For all instances considered in this work, however, the parameter
estimate performs well enough to be used without any further
parameter optimization.
Next, we state the extended result for weighted MaxCut.
\begin{lemma}\label{lem:exp}
	Let $G=(V,E)$ graph with edge weights $w\in \R^{|E|}$.
	Let $\gamma, \beta \in \R$ and $F(\gamma,\beta)$ be defined as in \eqref{eq:qaoa_exp} with $H_C$ given in \eqref{eq:maxcuth}.
	Further, for $u,v \in V$, let $\mathcal{N}_u(v)$ be the set of neighbours of $v$ excluding $u$ and denote by $\Lambda(u,v)$ the set of common neighbours of $u$ and $v$.
	Then it holds
	\begin{align}
		F(\gamma,\beta)=\sum_{uv\in E}f_{uv}(\gamma,\beta), \label{eq:expthF}
	\end{align}
	where
	\begin{align}
		\begin{aligned}
			f_{uv}(\gamma,\beta) &= w_{uv} \Bigg[ \frac{1}{2} + \frac{1}{4}\sin(4\beta)\sin(\gamma w_{uv})\left(\prod_{s\in \mathcal{N}_v(u)}\cos(\gamma w_{us}) + \prod_{t \in \mathcal{N}_u(v)}\cos(\gamma w_{vt}) \right) \\
			&- \frac{1}{2}\sin^2(2\beta)\sum_{\substack{ N\subseteq\Lambda(u,v)\\ |N| = 1,3,5,\dots}}\prod_{s\in \mathcal{N}_v(u)\setminus N}\cos(\gamma w_{us}) \prod_{t \in \mathcal{N}_u(v)\setminus N} \cos(\gamma w_{vt}) \prod_{r \in N}\sin(\gamma w_{ur})\sin(\gamma w_{vr}) \Bigg]\ . \label{eq:expth}
		\end{aligned}
	\end{align}
\end{lemma}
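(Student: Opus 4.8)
The plan is to evaluate $F(\gamma,\beta)$ in the Heisenberg picture, one edge at a time. By linearity of the expectation and the explicit form \eqref{eq:maxcuth} of the phase Hamiltonian,
\begin{align}
	F(\gamma,\beta)&=\sum_{uv\in E}\frac{w_{uv}}{2}\left(1-\langle Z_uZ_v\rangle\right),\\
	\langle Z_uZ_v\rangle&\coloneqq\bra{+}e^{i\gamma H_C}e^{i\beta H_M}\,Z_uZ_v\,e^{-i\beta H_M}e^{-i\gamma H_C}\ket{+},
\end{align}
so $f_{uv}=\tfrac{w_{uv}}{2}\left(1-\langle Z_uZ_v\rangle\right)$ and it suffices to compute the edge correlator $\langle Z_uZ_v\rangle$ for a single fixed edge $uv$.

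Since $H_M=\sum_i X_i$ is a sum of commuting single-qubit operators, the mixing layer rotates $Z_u$ and $Z_v$ independently, $e^{i\beta X_a}Z_ae^{-i\beta X_a}=\cos(2\beta)Z_a+\sin(2\beta)Y_a$, and expanding the product yields the four Pauli terms $Z_uZ_v,\ Z_uY_v,\ Y_uZ_v,\ Y_uY_v$ with coefficients $\cos^2(2\beta),\ \cos(2\beta)\sin(2\beta),\ \sin(2\beta)\cos(2\beta),\ \sin^2(2\beta)$. To push these through the phase layer I would use that, because the summands $Z_aZ_b$ of $H_C$ incident to a fixed vertex $a$ pairwise commute,
\begin{align}
	e^{i\gamma H_C}Y_ae^{-i\gamma H_C}=\cos(\gamma D_a)\,Y_a-\sin(\gamma D_a)\,X_a,\qquad D_a\coloneqq\sum_{b\in\mathcal{N}(a)}w_{ab}Z_b,
\end{align}
which follows by diagonalising the diagonal operator $D_a$ and applying the elementary rotation $e^{-i\theta Z_a}Y_ae^{i\theta Z_a}=\cos(2\theta)Y_a-\sin(2\theta)X_a$ on each joint eigenspace of $\{Z_b\}_{b\in\mathcal{N}(a)}$; the operators $Z_u$ and $Z_v$ commute with $H_C$ outright. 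The resulting expectation is evaluated with $\bra{+}I\ket{+}=\bra{+}X\ket{+}=1$ and $\bra{+}Y\ket{+}=\bra{+}Z\ket{+}=0$, so that only Pauli strings supported on $\{I,X\}$ at every qubit survive, together with $\bra{+}\cos(\gamma\sum_q c_qZ_q)\ket{+}=\prod_q\cos(\gamma c_q)$.

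Carrying this out, the $Z_uZ_v$ term contributes $\bra{+}Z_uZ_v\ket{+}=0$. For $Z_uY_v$, writing $D_v=w_{uv}Z_u+D_v'$ with $D_v'\coloneqq\sum_{t\in\mathcal{N}_u(v)}w_{vt}Z_t$, only the $X_v$-component of the conjugated $Y_v$ survives the projection on qubit $v$, leaving $-\bra{+}Z_u\sin(\gamma D_v)\ket{+}=-\sin(\gamma w_{uv})\prod_{t\in\mathcal{N}_u(v)}\cos(\gamma w_{vt})$; together with the symmetric contribution of $Y_uZ_v$ and the coefficient $\cos(2\beta)\sin(2\beta)=\tfrac12\sin(4\beta)$, this yields the $\tfrac14\sin(4\beta)$-line of \eqref{eq:expth} once inserted into $f_{uv}=\tfrac{w_{uv}}{2}(1-\langle Z_uZ_v\rangle)$.

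The main obstacle -- and the only genuinely new ingredient beyond the unweighted computation of \cite{Wang_2018} -- is the $Y_uY_v$ term. Expanding the conjugated $Y_u$ and $Y_v$ by means of $D_u=w_{uv}Z_v+D_u'$ and $D_v=w_{uv}Z_u+D_v'$ produces sixteen Pauli products; a short inspection of the single-qubit content at $u$ and at $v$ shows that only four of them can pass the $\ket{+}$-projection, two of which cancel each other (as they must, since $\langle Z_uZ_v\rangle$ is real), so that only $\bra{+}\sin(\gamma D_u')\sin(\gamma D_v')X_uX_v\ket{+}=\bra{+}\sin(\gamma D_u')\sin(\gamma D_v')\ket{+}$ remains. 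I would then apply $\sin A\sin B=\tfrac12\left(\cos(A-B)-\cos(A+B)\right)$ with $A=\gamma D_u'$ and $B=\gamma D_v'$: in $A\mp B$ a vertex neighbouring only $u$ (resp.\ only $v$) enters the $Z$-sum with coefficient $\pm\gamma w_{uq}$ (resp.\ $\mp\gamma w_{vq}$), whereas a common neighbour $r\in\Lambda(u,v)$ enters with $\gamma(w_{ur}\mp w_{vr})$. Using $\bra{+}\cos(\cdot)\ket{+}=\prod\cos$, the expansion $\cos(\gamma(w_{ur}\mp w_{vr}))=\cos(\gamma w_{ur})\cos(\gamma w_{vr})\pm\sin(\gamma w_{ur})\sin(\gamma w_{vr})$, and the algebraic identity $\prod_r(p_r+q_r)-\prod_r(p_r-q_r)=2\sum_{N\subseteq\Lambda(u,v),\ |N|\text{ odd}}\prod_{r\in N}q_r\prod_{r\in\Lambda(u,v)\setminus N}p_r$, this difference collapses to exactly the odd-cardinality sum over $N\subseteq\Lambda(u,v)$ appearing in \eqref{eq:expth}. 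Reassembling the three contributions into $f_{uv}=\tfrac{w_{uv}}{2}\left(1-\langle Z_uZ_v\rangle\right)$ and summing over $uv\in E$ gives \eqref{eq:expthF}; the remainder is routine trigonometric bookkeeping.
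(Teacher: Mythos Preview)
Your argument is correct and follows essentially the same Heisenberg-picture route as the paper: both decompose $F$ edge by edge, conjugate $Z_uZ_v$ by the mixer to obtain the four Pauli terms $Z_uZ_v,\ Y_uZ_v,\ Z_uY_v,\ Y_uY_v$, then push each through the phase layer and keep only contributions supported on $\{I,X\}$ at every qubit. The treatment of the $Z_uZ_v$ and the two cross terms is effectively identical to the paper's.

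The only place where your bookkeeping differs noticeably is the $Y_uY_v$ term. The paper expands $\prod_{s}(c'_{us}I-is'_{us}Z_uZ_s)\prod_{t}(c'_{tv}I-is'_{tv}Z_tZ_v)\,Y_uY_v$ directly and identifies, combinatorially, the terms proportional to $X_uX_v$ as precisely those coming from an odd choice of common neighbours. You instead package the phase conjugation as $\cos(\gamma D_a)Y_a-\sin(\gamma D_a)X_a$, reduce to $\bra{+}\sin(\gamma D_u')\sin(\gamma D_v')\ket{+}$, and then recover the odd-subset sum via the trigonometric identity $\sin A\sin B=\tfrac12(\cos(A-B)-\cos(A+B))$ together with $\prod_r(p_r+q_r)-\prod_r(p_r-q_r)=2\sum_{|N|\ \text{odd}}\prod_{r\in N}q_r\prod_{r\notin N}p_r$. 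This is a slightly slicker way to isolate the odd-cardinality sum (and it makes transparent why only common neighbours matter), but it encodes exactly the same combinatorics as the paper's direct expansion. The two arguments are equivalent reformulations of the computation in \cite{Wang_2018} extended to arbitrary weights.
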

\begin{proof}
The proof follows the concept from \cite{Wang_2018} \tb{and~\cite{hadfield2018quantum}}.
We need to evaluate the expectation value of the costs associated to a specific edge $uv\in E$,
\begin{align}\label{eq:f}
	f_{uv} &= \bra{+} e^{i\gamma H_{C}}e^{i \beta H_M} \frac{1}{2}(1-Z_uZ_v) e^{-i \beta H_M}e^{-i\gamma H_{C}} \ket{+}\\
	&= \frac{1}{2}\big(1 - \bra{+} e^{i\gamma H_{C}}e^{i \beta H_M} Z_uZ_v e^{-i \beta H_M}e^{-i\gamma H_{C}} \ket{+}\big)\ ,
\end{align}
where $H_{C}$ is the MaxCut Hamiltonian defined in \eqref{eq:maxcuth}.
For the conjugation with $e^{i\gamma H_{C}}$, it is convenient to neglect the constant terms.
Thus, we define
\begin{align}
	\tilde{H}_{C}\coloneqq - \frac{1}{2} Z_u Z_v + \tilde{H}_{C_u} + \tilde{H}_{C_v}\,
\end{align}
with
\begin{align}
	\tilde{H}_{C_u}\coloneqq - \frac{1}{2} \sum_{s\in \mathcal{N}_v(u)} Z_u Z_s.
\end{align}
Now, with $c=\cos 2\beta$ and $s=\sin 2\beta$, it holds
\begin{align}\label{eq:conjb}
	e^{i \beta H_M} Z_uZ_v e^{-i \beta H_M} 
	= c^2 Z_uZ_v + sc (Y_uZ_v + Z_uY_v) + s^2 Y_uY_v\ .
\end{align}
Each term is conjugated with $e^{i\gamma H_{C}}$ separately.
The key observation is that only terms proportional to products of $X$-operators will contribute to the expectation value
since $\bra{+}Y\ket{+} = \bra{+}Z\ket{+} = 0$.
Thus, the first term in \eqref{eq:conjb} does not contribute.
We now turn to the second term.
Let $c_{ij}' = \cos (w_{ij}\gamma)$ and $s'_{ij} = \sin (w_{ij}\gamma$). By using $YZ = - ZY$ we have
\begin{align}
	e^{i\gamma H_{C}} Y_{u}Z_v e^{-i\gamma H_{C}} &= e^{i\gamma Z_u Z_v} e^{i2\gamma \tilde{H}_{C_u}} Y_{u}Z_v \\
	&= (Ic'_{uv}- i s'_{uv} Z_u Z_v)\prod_{s\in \mathcal{N}_v(u)} (Ic'_{us} -is'_{us} Z_u Z_s) Y_u Z_v\ .
\end{align}
When expanding the product, there is only a single term that will contribute to the expectation value.
This is the term which results from choosing only factors $Ic'_{us}$ in the product. This term is proportional to
\begin{align}
	Z_uZ_vY_uZ_v = -iX_u\ .
\end{align}
Thus
\begin{align}
	\bra{+} e^{i\gamma H_{C}} Y_{u}Z_v e^{-i\gamma H_{C}} \ket{+} = -s'_{uv} \prod_{s\in \mathcal{N}_v(u)}c'_{us}
\end{align}
By symmetry, it follows for the third term in \eqref{eq:conjb},
\begin{align}
	\bra{+} e^{i\gamma H_{C}} Z_{u}Y_v e^{-i\gamma H_{C}} \ket{+} = -s'_{uv} \prod_{t\in \mathcal{N}_u(v)}c'_{tv}\,.
\end{align}
Now, we turn to the last term in \eqref{eq:conjb}. With $YZ = - ZY$, it follows
\begin{align}
	e^{i\gamma H_{C}} Y_{u}Y_v e^{-i\gamma H_{C}} &=
	e^{2i\gamma \tilde{H}_{C_u}} e^{2i\gamma \tilde{H}_{C_v}} Y_{u}Y_v  \\
	&=   \prod_{s\in \mathcal{N}_v(u)} (c'_{us} I - i s'_{us} Z_uZ_{s})\prod_{t\in \mathcal{N}_u(v)} (c'_{tv}I - i s'_{tv} Z_{t}Z_v)  Y_{u}Y_v   \,.
\end{align}
When expanding the product, only terms proportional to $X_uX_v$ will contribute to the expectation value.
If we choose a single term $Z_uZ_{s}$ in the first product, and a single term  $Z_tZ_{v}$ in the second product such that $s = t$ (\ie a common neighbor of $u$ and $v$),
the resulting term will be proportional to $X_uX_v$ since
\begin{align}
	Z_u Z_s Z_t Z_v Y_u Y_v = Z_u Z_v Y_u Y_v = -X_u X_v
\end{align}
This is also the case for any odd combination of common neighbors.
In fact, those are exactly the terms which are proportional to $X_u X_v$.
Thus, summation over all odd combinations of common neighbors gives
\begin{align}
	\bra{+} e^{i\gamma H_{C}} Y_{u}Y_v e^{-i\gamma H_{C}} \ket{+} 
	= 	\sum_{\substack{ N\subseteq\Lambda(u,v)\\ |N| = 1,3,5,\dots}}\prod_{s\in \mathcal{N}_v(u)\setminus N}c'_{us} \prod_{t \in \mathcal{N}_u(v)\setminus N} c_{vt} \prod_{r \in N}s_{ur}s_{vr}\ .
\end{align}
Finally, substituting \eqref{eq:conjb} in \eqref{eq:f} yields the result stated in \eqref{eq:expth}.

\end{proof}
\tb{The formula~\eqref{eq:expth} in Lemma~\ref{lem:exp} is cheap to evaluate numerically on sparse instances.
	For a given edge $uv\in E$ ,the number of terms in the sum in Eq.~\eqref{eq:expth} is in $O(2^{|\Lambda(u,v)|}$).
	For sparse instances, $|\Lambda(u,v)|$ is small, often zero, such that the sum in \eqref{eq:expth} has only view terms.
	In contrast, the computational cost of the formula derived in~\cite{Ozaeta_2022} and~\cite{Bravyi_2020} does not depend on $|\Lambda(u,v)|$ but grows linearly with $|N(uv)|$, which is larger than $O(2^{|\Lambda(u,v)|}$)	on sparse instances.
	On the other hand, the number of terms in~\eqref{eq:expth} is very large for dense instances.
	Then, the formula in~\cite{Ozaeta_2022} and~\cite{Bravyi_2020} can be evaluated faster.}

An often studied class of instances consists of weights that are
chosen following a binary distribution in $\{-a,a\},\ a>0$. Then, for specific graph topologies, maximizers of $F(\gamma,\beta)$ can be
derived analytically from Lemma~\ref{lem:exp}, as we show next.

\begin{corollary}\label{cor}
	For triangle-free, $d$-regular graphs with weights taking values in $\{-a,a\},\ a>0$, \eqref{eq:expthF} is maximized for
	\begin{align}
		\gamma = \frac{1}{a} \arctan\left(\frac{1}{\sqrt{d-1}}\right)\ ,\quad\beta=\frac{\pi}{8}\ .\label{eq:optparams}
	\end{align}
\end{corollary}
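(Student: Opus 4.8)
The plan is to specialize Lemma~\ref{lem:exp} to the assumed instance class and then reduce the maximization to two decoupled one-variable problems.

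First I would invoke the triangle-free hypothesis: since $G$ has no triangles, no edge $uv\in E$ has a common neighbour, so $\Lambda(u,v)=\emptyset$ and the entire $\sin^2(2\beta)$-term in \eqref{eq:expth} vanishes. Then I would exploit the two remaining structural assumptions on the leftover term. Because every weight lies in $\{-a,a\}$, we have $\cos(\gamma w_{us})=\cos(a\gamma)$ (cosine is even) and $w_{uv}\sin(\gamma w_{uv})=a\sin(a\gamma)$ regardless of the sign of $w_{uv}$ (sine is odd, so the two sign choices give the same product); because $G$ is $d$-regular, $|\mathcal{N}_v(u)|=|\mathcal{N}_u(v)|=d-1$, so each of the two products in \eqref{eq:expth} equals $\cos^{d-1}(a\gamma)$. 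Substituting these identities into \eqref{eq:expth} and summing over all edges via \eqref{eq:expthF} collapses $F$ to
\[
F(\gamma,\beta)=\tfrac12\sum_{uv\in E}w_{uv}+\tfrac{a|E|}{2}\,\sin(4\beta)\,\sin(a\gamma)\,\cos^{d-1}(a\gamma).
\]

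The first summand is a constant, so maximizing $F$ is equivalent to maximizing the product $\sin(4\beta)\cdot\bigl(\sin(a\gamma)\cos^{d-1}(a\gamma)\bigr)$, whose two factors depend on disjoint variables. I would treat them separately: $\sin(4\beta)$ attains its maximum value $1$ at $\beta=\pi/8$, and it remains to maximize $\phi(\theta):=\sin\theta\,\cos^{d-1}\theta$ over $\theta\in\R$, where $\theta=a\gamma$ and $a>0$. Since $\phi$ is continuous, vanishes at $\theta\in\{0,\pi/2\}$ and is strictly positive on $(0,\pi/2)$, its maximum over $[0,\pi/2]$ — which, by inspecting the sign pattern of $\phi$ over $\R$, is also its global maximum — is attained at an interior critical point. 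Differentiating gives $\phi'(\theta)=\cos^{d-2}\theta\,\bigl(\cos^2\theta-(d-1)\sin^2\theta\bigr)$, so the relevant critical point solves $\tan^2\theta=1/(d-1)$, i.e.\ $\theta=\arctan\!\bigl(1/\sqrt{d-1}\bigr)$, which yields $\gamma=\tfrac1a\arctan\!\bigl(1/\sqrt{d-1}\bigr)$ as claimed.

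There is no serious obstacle here; once Lemma~\ref{lem:exp} is available the argument is essentially a short calculus exercise. The only places requiring a little care are (i) verifying that the cancellation $w_{uv}\sin(\gamma w_{uv})=a\sin(a\gamma)$ is genuinely sign-independent, so that the reduction to a single sine/cosine shape holds simultaneously for every edge, and (ii) justifying that the interior critical point of $\phi$ on $[0,\pi/2]$ gives the \emph{global} maximum over all of $\R$ rather than merely a local one — a brief case check on the signs of $\sin\theta$ and $\cos\theta$, equivalently the observation that $|\phi|$ is maximised in the first quadrant while the choice $\sin(4\beta)=1$ selects the positive branch of the product.
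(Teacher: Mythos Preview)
Your proposal is correct and follows essentially the same route as the paper: specialize Lemma~\ref{lem:exp} using $\Lambda(u,v)=\emptyset$, $|\mathcal{N}_v(u)|=d-1$, and $w_{uv}\in\{-a,a\}$ to collapse each $f_{uv}$ to $\tfrac{w_{uv}}{2}+\tfrac{a}{2}\sin(4\beta)\sin(a\gamma)\cos^{d-1}(a\gamma)$, then optimize. The paper merely asserts ``by differentiation it is easily verified'' at the final step, whereas you spell out the decoupling and the sign/global-maximum justification for $\phi(\theta)=\sin\theta\cos^{d-1}\theta$; this extra care is welcome but does not constitute a different argument.
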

\begin{proof}
	\tb{The proof is conducted analogous to~\cite{Wang_2018}, Corollary 1, and~\cite{hadfield2018quantum}, Corollary~2.}
	Using $\Lambda(u,v)=0$, $|\mathcal{N}_u(v)|=|\mathcal{N}_u(v)|=d-1$, $w_{uv}\in \{-a,a\}$, \eqref{eq:expth} simplifies to
	\begin{align}\label{eq:expc} 
		f_{uv}(\gamma,\beta)&=w_{uv} \Bigg[ \frac{1}{2} + \frac{1}{2}\sin(4\beta)\sin(\gamma w_{uv})\cos^{d-1}(\gamma a) \Bigg]\\ 
		&=\frac{w_{uv}}{2} + \frac{a}{2}\sin(4\beta)\sin(\gamma a)\cos^{d-1}(\gamma a)\ .
	\end{align}
	By differentiation \wrt $\beta$ and $\gamma$, it is easily verified that \eqref{eq:optparams} maximizes \eqref{eq:expc}.
	Since the maximizers \eqref{eq:optparams} do not depend on $uv$, they also maximize \eqref{eq:expthF}.
\end{proof}

Although possibly not always being the best choice, Corollary~\ref{cor} motivates the following parameter guess for arbitrary weighted graphs:
\begin{align}\label{eq:params}
	\bar{\gamma} = \frac{1}{\bar{a}} \arctan\left(\frac{1}{\sqrt{\bar{d}-1}}\right)\ ,\quad\bar{\beta}=\frac{\pi}{8}\ .
\end{align}
Here, $\bar{a}$ is the mean of absolute weight values,
\begin{align}
	\bar{a} \coloneqq \frac{1}{|E|}\sum_{uv \in E} |w_{uv}|\ ,
\end{align} 
and $\bar{d}$ is the average vertex degree.
For triangle-free, regular graphs with weights in $\{-a,a\}$, \eqref{eq:params} reduces to \eqref{eq:optparams}.

As mentioned above, our numerical experiment show a good performance of parameters \eqref{eq:params} such that we use them without further optimization.
\tb{Of course, better parameters might exist and could possibly be found by a classical parameter optimization.
However, this amounts to solving a non-convex, continuous optimization problem
 which requires multiple estimations of $F(\gamma,\beta)$ via sampling from quantum hardware, thus increasing the runtime of QAOA significantly
compared to our approach, which keeps parameters fixed.}

Having introduced the algorithmic framework,
we are now ready to discuss experimental results from our implementations.

	\section{Experimental Results}\label{sec:exp}

In this section, we present various computational results for the proposed algorithm.
All implementations are done in Python.
We use the open-source quantum-software development-kit \emph{Qiskit} \cite{Qiskit_2021} for quantum circuit construction, ideal quantum simulation as well as quantum hardware communication.
For graph operations we use the package \emph{networkx} \cite{hagbger2008networkx}.
Integer models and relaxations are solved via the Python interface of the solver \emph{Gurobi} \cite{Gurobi}.
Quantum hardware experiments are performed on the quantum backend \emph{ibmq\_ehningen} \cite{ibmq}
which has 27 superconducting qubits.
However, in our experiments we use at most 10 qubits 
since the influence of noise
becomes prohibitively large for higher qubit numbers.
 
\subsection{QAOA Parameter Prediction via (\ref{eq:params})}\label{sec:exp_params}
First, we evaluate the performance of the quantum part of our algorithm, which is a depth-1-QAOA with predetermined parameters as stated in \eqref{eq:params}.
\tb{As mentioned earlier, the high noise level in the quantum hardware used in our experiments prohibits the use of larger-depth QAOA,
	 which would improve the obtained solutions.
Additionally, for depth-1-QAOA we can exploit the parameter estimate~\eqref{eq:params} which renders parameter optimization unnecessary.}

We compare the quality of the estimated parameters to the best possible parameter choice.
As a performance metric, we measure the average value of the produced cut size on different weighted MaxCut instances.
For each instance, we simulate the quantum algorithm on an ideal, noise-free device, \ie we numerically evaluate $F(\gamma,\beta)$ via \eqref{eq:expth}.
Additionally, we perform experiments on quantum hardware.
Here, we evaluate \eqref{eq:qaoa_avg}
with a sample size of $N=1024$ and $C$ as defined in \eqref{eq:Objectivex}.
In order to find \tb{close-to-optimal} parameters, we perform an exhaustive grid search.
Parameters $(\gamma,\beta)$ are chosen from a grid with step size $0.1$ on $[0,\pi/2]\times[0,\pi/2]$.
Limiting the parameter search space is eligible because of symmetry relations in QAOA, \cf \cite{Zhou_2020}.
\tb{In principle, a smaller step size would yield better parameters.
In our experiments, however, we observe that a step size of $0.1$ is sufficient to capture the characteristics of the parameter-landscape, \cf also Figs.~\ref{fig:pred} and \ref{fig:pred3}}.

We consider three sets of test instances.
The first set is constructed solely to investigate the quality of the predetermined parameter choice \eqref{eq:params}.
We construct instances fulfilling all or only some of the assumptions
under which these parameters are provably optimal for an ideal quantum device.
Recall from Corollary~\ref{cor}, that these assumptions are:
\begin{enumerate}[(i)]
	\item The graph is regular.
	\item The graph is triangle-free.
	\item The weights are chosen from the set $\{-a,a\}$ for some $a>0$.
\end{enumerate}
Owing to the limitations of quantum hardware,
we restrict the size of instances to the minimum which
still allows for resembling the desired characteristics.
Thus, all instances from the first set have only four vertices and can be solved by \tb{hand}.
They are depicted in Fig.~\ref{fig:param_instances}.
\begin{figure}[]
	\centering
	\subfloat[]{\includegraphics[width=0.25\linewidth]{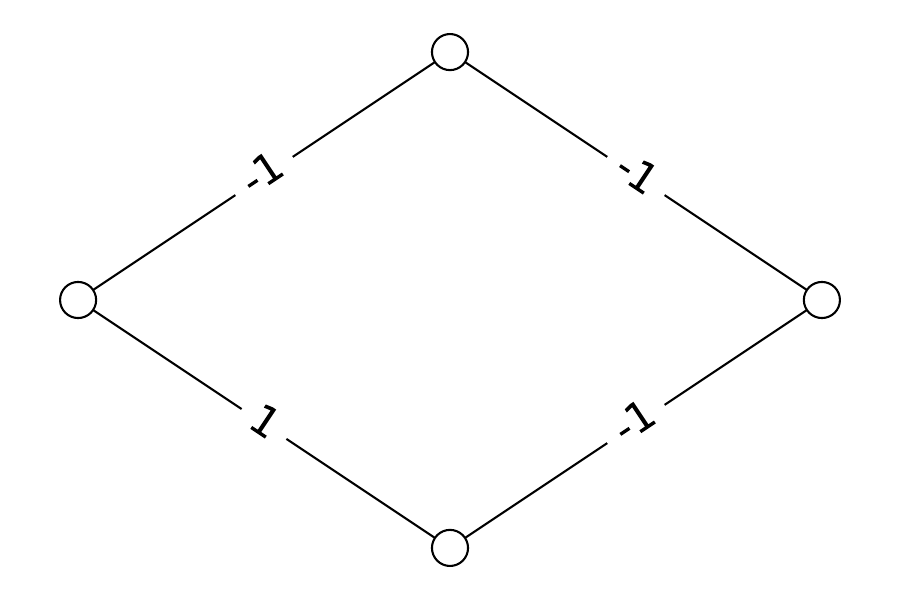}}
	\subfloat[]{\includegraphics[width=0.25\linewidth]{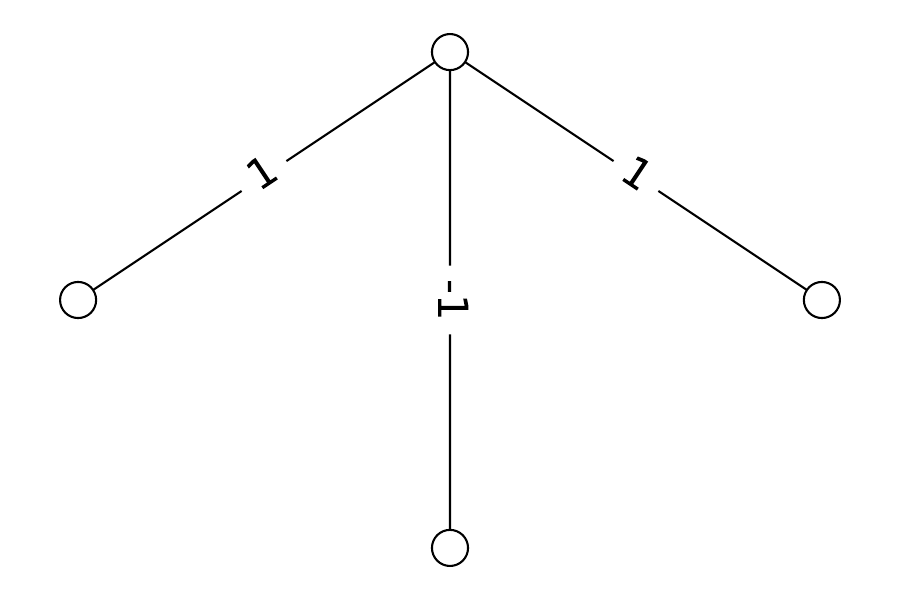}}
	\subfloat[]{\includegraphics[width=0.25\linewidth]{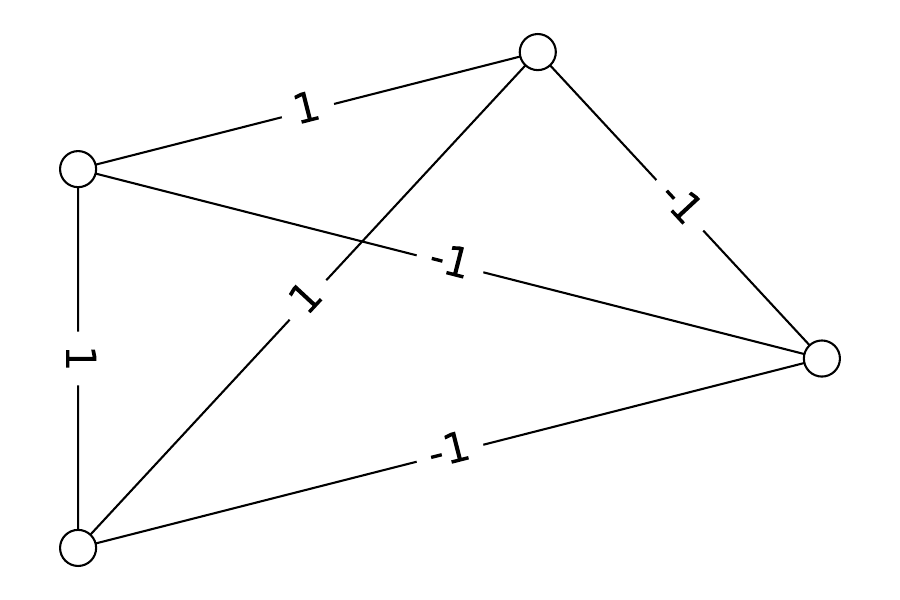}}\\
	\subfloat[]{\includegraphics[width=0.25\linewidth]{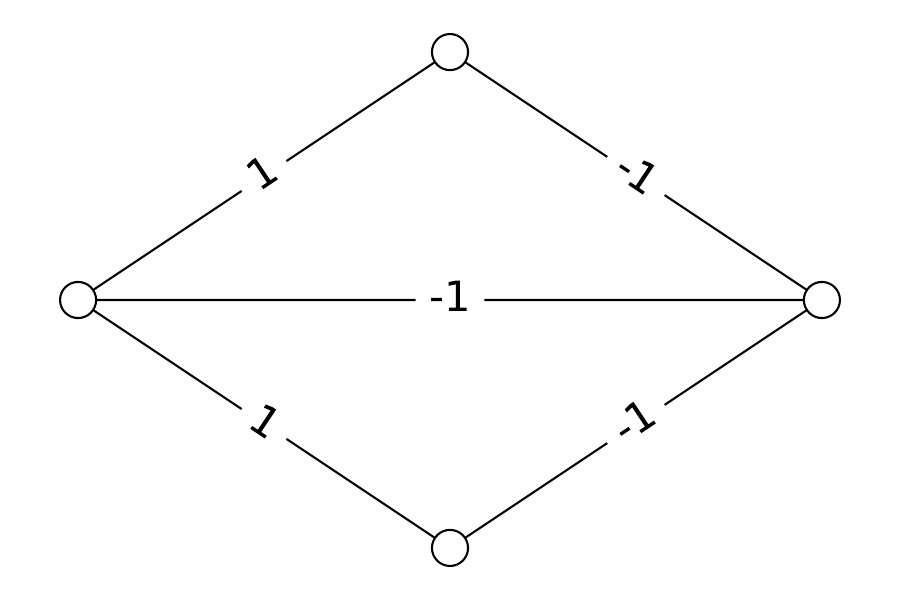}}
	\subfloat[]{\includegraphics[width=0.25\linewidth]{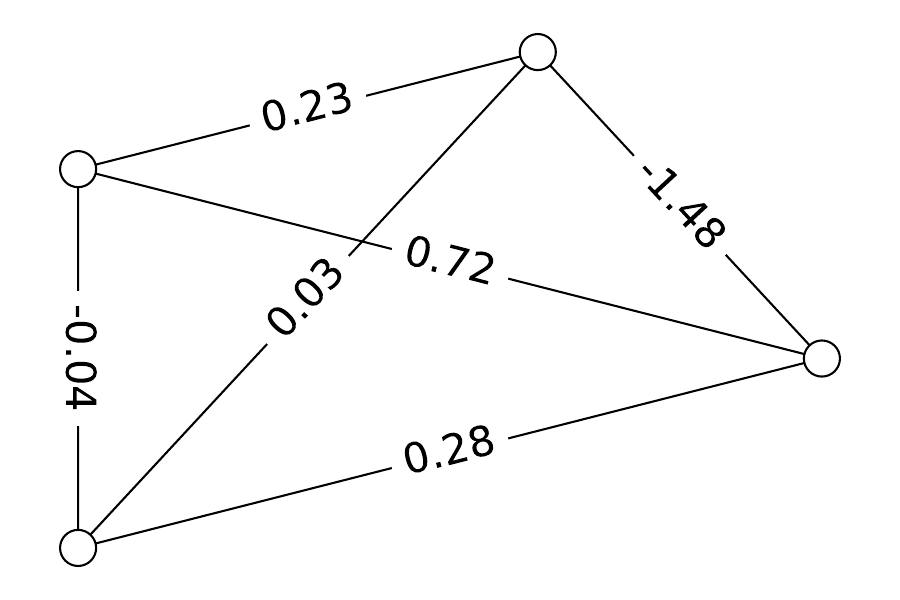}}
	\subfloat[]{\includegraphics[width=0.25\linewidth]{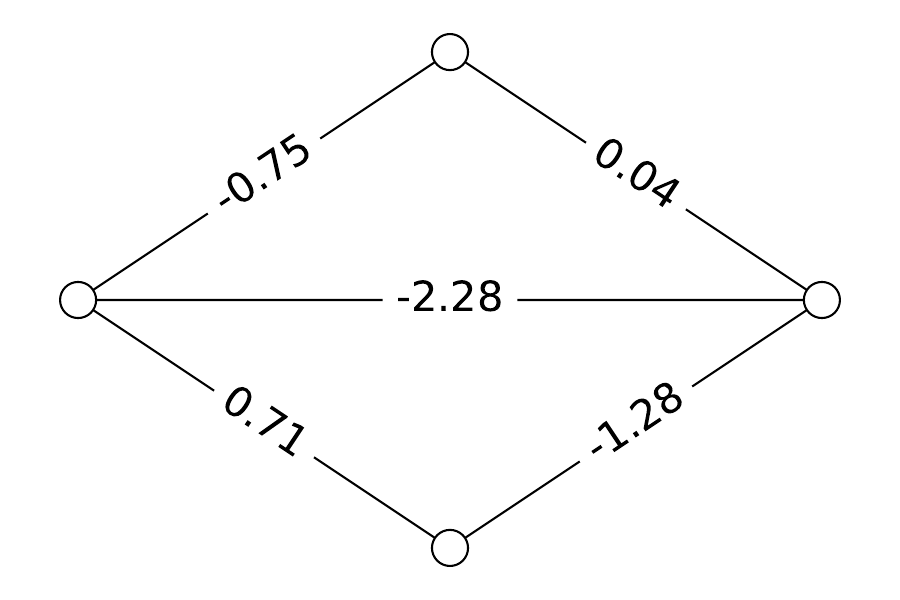}}
	\caption{Test instances for the parameter-estimate evaluation.}
	\label{fig:param_instances}
\end{figure}
As topologies, we choose a ring, a star, a complete graph and a ring with chord.
Weights are drawn either uniformly at random from $\pm 1$ or from a normal distribution.
These weight distributions are motivated from spin glass physics \cf \eg \cite{Liers_2003}.

The second set contains MaxCut instances which result from shrinking
a $2\times3$ grid with $\pm1$ weights: an instance considered in the next section,
where we evaluate the combination of shrinking and QAOA.
Instance names of the second set are of the form \qm{2x3g$ns$}.
Here, $n$ denotes the number of vertices in the shrunk graph
and $s$ marks whether the cycle relaxation was used for shrinking (\qm{c})
or random edges were shrunk (\qm{r}), see Section~\ref{sec:exp_shrink} for details.
If random shrinking and cycle-relaxation-shrinking led to the same
instance, \qm{rc} is used.

The third set is constructed to investigate the scalability of our parameter estimate.
Here, we consider a 10-regular graph on 100 vertices (\qm{100reg}) as well as a 100-vertex ring
with 20 additional edges inserted such that the graph remains triangle-free (\qm{100trf}).
\tb{These two instances have normal distributed weights.
Moreover, we consider four 100-vertex Erdős–Rényi random graph ensembles with densities $d\in \set{0.05, 0.1, 0.15, 0.2}$ (\qm{100rand$d$}).
These graphs are unweighted, i.e., edges have unit weights.
It is known that random unweighted MaxCut undergoes a phase transition at the critical edge density of $d^*=1/|V|$, \cf~\cite{Coppersmith_2003}.
If $d<d^*$, the expected number of edges not in an optimum cut is $\Theta(1)$ for large $|V|$.
If $d>d^*$, the expected number of edges not in an optimum cut jumps to $\Theta(|V|)$.
Thus, we consider four Erdős–Rényi ensembles: a sub-critical ensemble with $d=0.05$, a critical ensemble with $d=d^*=0.1$ and two super-critical ensembles with $d\in \set{0.15,0.2}$.
Here, we average over 20 graphs from each ensemble.}

Table~\ref{tab:param_pred} summarizes the experimental results.
Instance names of the first set correspond to sub-figures in Fig.~\ref{fig:param_instances}.
For each instance, we mark which of assumptions (i)-(iii) are met.
Furthermore, we measure the relative deviation of $F(\bar{\gamma},\bar{\beta})$ from the maximum of $F(\gamma,\beta)$,
\begin{align}
	\frac{\max F(\gamma,\beta)-F(\bar{\gamma},\bar{\beta})}{\max F(\gamma,\beta)-\min F(\gamma,\beta)}\in [0,1]\ .
\end{align}
Thus, a value of 0 means that indeed the maximum is hit, whereas 1 means that we hit the minimum instead.
\tb{Here, we estimate the true maxima and minima via the respective values from the grid search.}
Analogously, for the real quantum experiment, we measure
\begin{align}
	\frac{\max \langle C \rangle(\gamma,\beta)-\langle C \rangle(\bar{\gamma},\bar{\beta})}{\max \langle C \rangle(\gamma,\beta)-\min \langle C \rangle(\gamma,\beta)}\in [0,1]\ .
\end{align}
As expected, the deviation of $F(\bar{\gamma},\bar{\beta})$ from the optimum value of $F(\gamma,\beta)$ increases when the instance violates more assumptions from (i)-(iii).
The maximum observed deviation is 10\% on instance f which violates all assumptions.
From an integer-programming point of view, 10\% might seem a large deviation.
However, we stress that we do not compare single solution values but expectation values as QAOA is a probabilistic algorithm.
Of course, the best solution in a reasonably sized sample will be better than the expectation value.

In Table~\ref{tab:param_pred}, the values of $\langle C \rangle (\bar{\gamma},\bar{\beta})$, measured with the real quantum backend, roughly follow the qualitative behavior of $F(\bar{\gamma},\bar{\beta})$ across all instances.
Quantitatively, the values from real hardware always lie slightly above the corresponding values from the ideal simulation.
This means, that our parameter estimate performs slightly worse on real hardware than in theory.
Of course, in real quantum hardware, many physical effects influence the outcome of QAOA which are all not considered in the derivation of \eqref{eq:params}.
With this in mind, it is even more encouraging that our parameter
estimate hits the optimum within a maximum observed deviation of
at most 13\%.

Considering the instances \qm{2x3g$ns$}, resulting from shrinking the $2\times3$ grid,
observed deviations are typically less than for the artificial instances a - f.
This further motivates the use of QAOA with parameters \eqref{eq:params},
when combined with shrinking as investigated in the next section.

The large instances with 100 vertices could not be executed on real hardware (\qm{-} in Table~\ref{tab:param_pred}).
\tb{Still, we can evaluate~\eqref{eq:expth} numerically.}
We observe a good performance of the parameter estimate in the numerical simulations \tb{for all large instances with a maximum deviation of 1 \%.}
\tb{Although the estimate seems to become slightly worse with increasing density, }
these results indicate a high scalability of the proposed method.

\begin{table}[t]
	\makebox[\textwidth][c]{
	\begin{tabular}{cccccc}
		\toprule
		Instance & Regular & Triangle-free & $w_{ij}\in \set{-a,a} $ &$\frac{\max F(\gamma,\beta)-F(\bar{\gamma},\bar{\beta})}{\max F(\gamma,\beta)-\min F(\gamma,\beta)}$ & $\frac{\max \langle C \rangle(\gamma,\beta)-\langle C \rangle(\bar{\gamma},\bar{\beta})}{\max \langle C \rangle(\gamma,\beta)-\min \langle C \rangle(\gamma,\beta)}$\\
		\midrule
		a  & \cmark & \cmark & \cmark & 0.0 & 0.03   \\
		b  & \xmark & \cmark & \cmark & 0.0 & 0.02   \\
		c  & \cmark & \xmark & \cmark & 0.07 & 0.02   \\
		d  & \xmark & \xmark & \cmark & 0.05 & 0.09   \\
		e  & \cmark & \xmark & \xmark & 0.08 & 0.08   \\
		f  & \xmark & \xmark & \xmark & 0.10 & 0.13   \\
		2x3g2r  & \cmark & \cmark & \cmark & 0.0 & 0.0   \\
		2x3g3rc  & \cmark & \xmark & \cmark & 0.06 & 0.06   \\
		2x3g4c  & \cmark & \cmark & \xmark & 0.0 & 0.0   \\
		2x3g4r  & \xmark & \cmark & \cmark & 0.0 & 0.01   \\
		2x3g5c  & \xmark & \xmark & \xmark & 0.0 & 0.04   \\
		2x3g5c  & \xmark & \xmark & \cmark & 0.0 & 0.05   \\
		2x3g6rc  & \xmark & \xmark & \cmark & 0.0 & 0.0   \\
		100reg  & \cmark & \xmark & \xmark & 0.007 & -   \\
		100trf  & \xmark & \cmark & \xmark & 0.02 & -   \\
		\tb{100rand0.05}  & \xmark & \xmark & \cmark & 0.0 & -   \\
		\tb{100rand0.1}  & \xmark & \xmark & \cmark & 0.0 & -   \\
		\tb{100rand0.15}  & \xmark & \xmark & \cmark & 0.007 & -   \\
		\tb{100rand0.2}  & \xmark & \xmark & \cmark & 0.01 & -   \\
		\bottomrule
	\end{tabular}
	\caption{Experimental results for evaluating the QAOA-parameter-estimate. In column \qm{Instance}, a - f refer to Fig.~\ref{fig:param_instances}.
	In columns \qm{Regular}, \qm{Triangle-free} and \qm{$w_{ij}\in \set{-a,a}$} fulfilled assumptions of Cor.~\ref{cor} are marked. The second last column gives the relative deviation of $F(\bar{\gamma},\bar{\beta})$ from the  optimal value. The values are calculated by \eqref{eq:expth}. The last column gives the deviation of $\langle C \rangle (\bar{\gamma},\bar{\beta})$ from the optimum. Here, real quantum hardware was used.}
	\label{tab:param_pred}
	}
\end{table}
To further illustrate the differences between ideal simulation and real experiment,
$F(\gamma,\beta)$ and $\langle C \rangle (\gamma,\beta)$ are visualized in Fig.~\ref{fig:pred}.
Although not being exactly identical, the data from the experiment qualitatively follows the simulation.
It is worth noting, that the absolute values in the experiment are usually smaller
than in the simulation.
The observed differences between simulation and experiment are due to noise effects in quantum hardware.
Analogue figures for other instances appear in Appendix \ref{app:exp_param}.
\begin{figure}[]
	\centering
	\subfloat[Instance a.]{\includegraphics[width=0.49\linewidth]{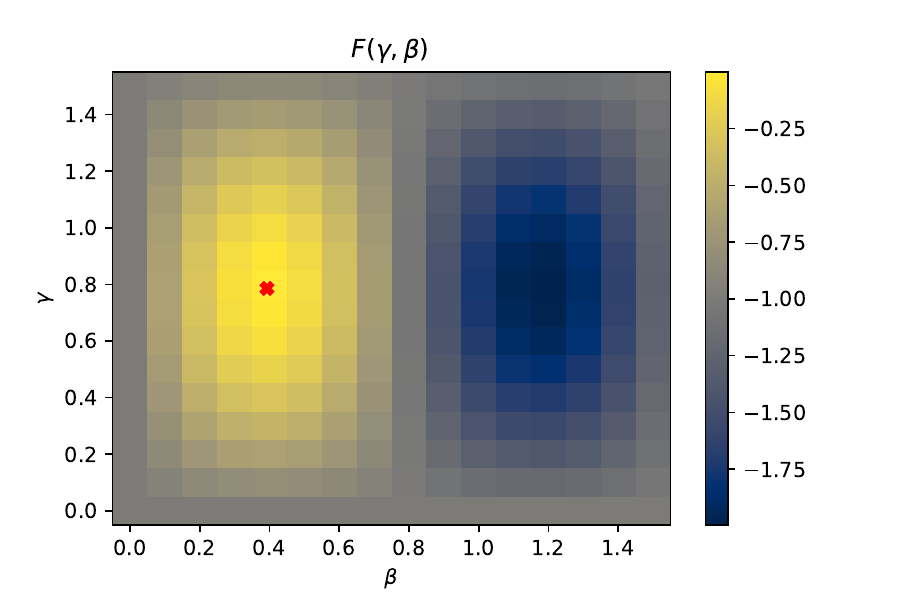}\includegraphics[width=0.49\linewidth]{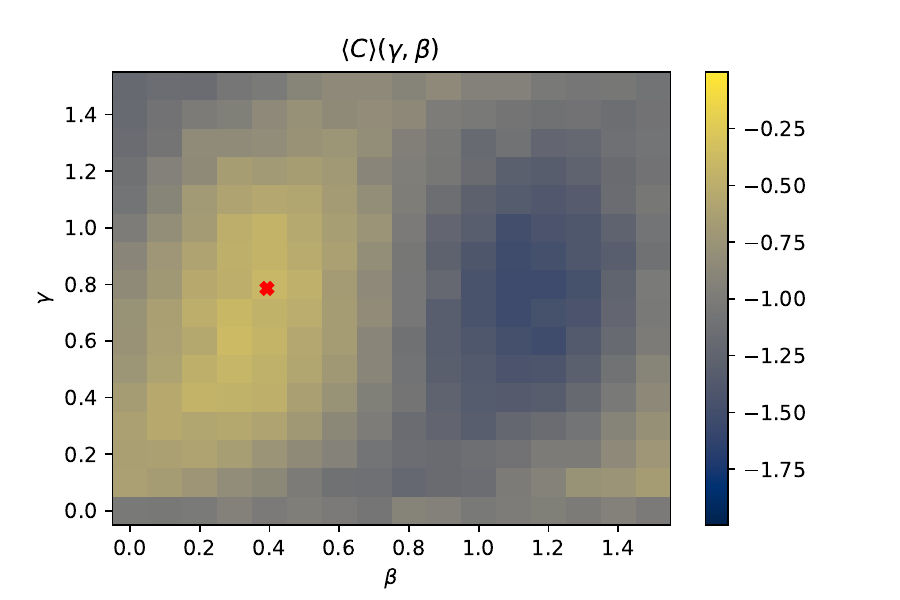}\label{fig:pred_ring}}\\
	\subfloat[Instance b.]{\includegraphics[width=0.49\linewidth]{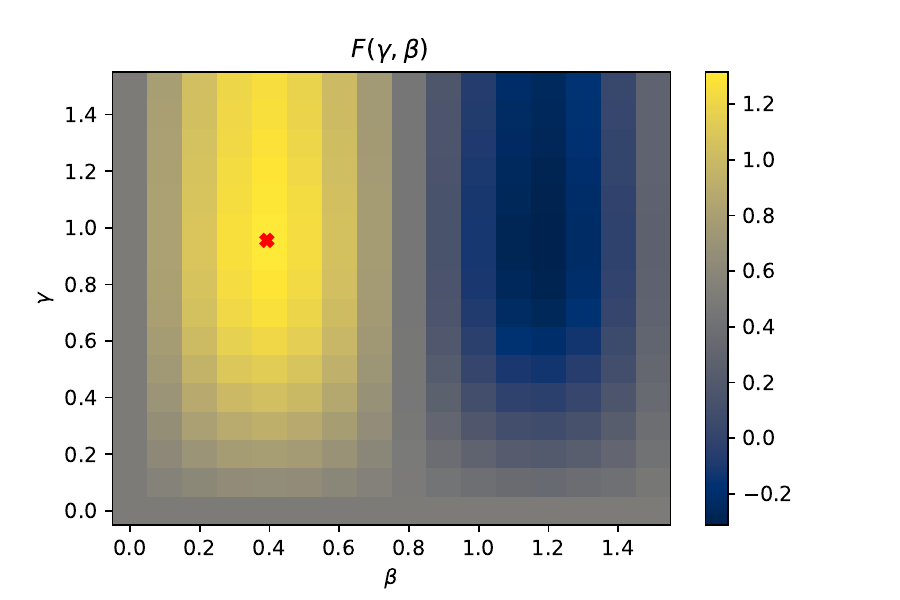}\includegraphics[width=0.49\linewidth]{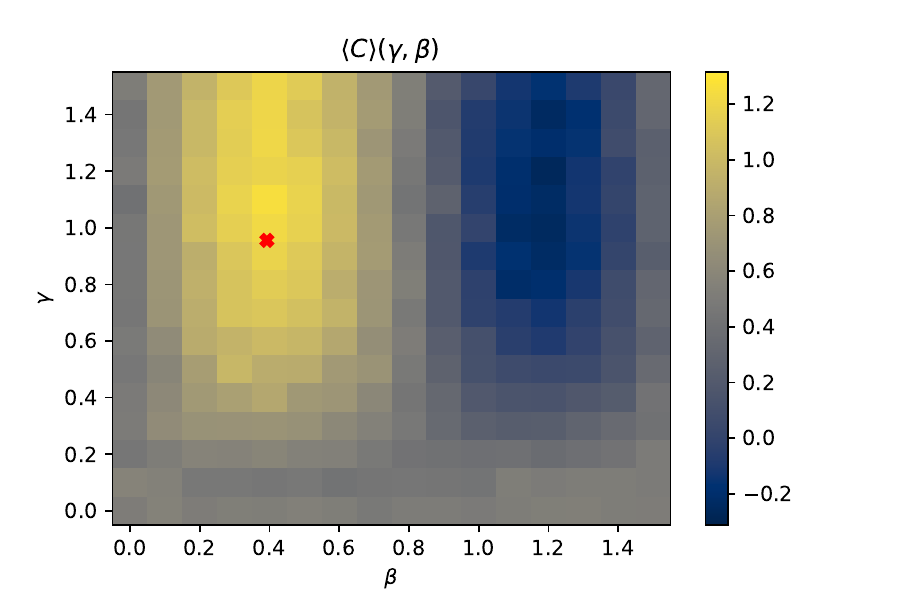}\label{fig:pred_star}}\\
	\subfloat[Instance c.]{\includegraphics[width=0.49\linewidth]{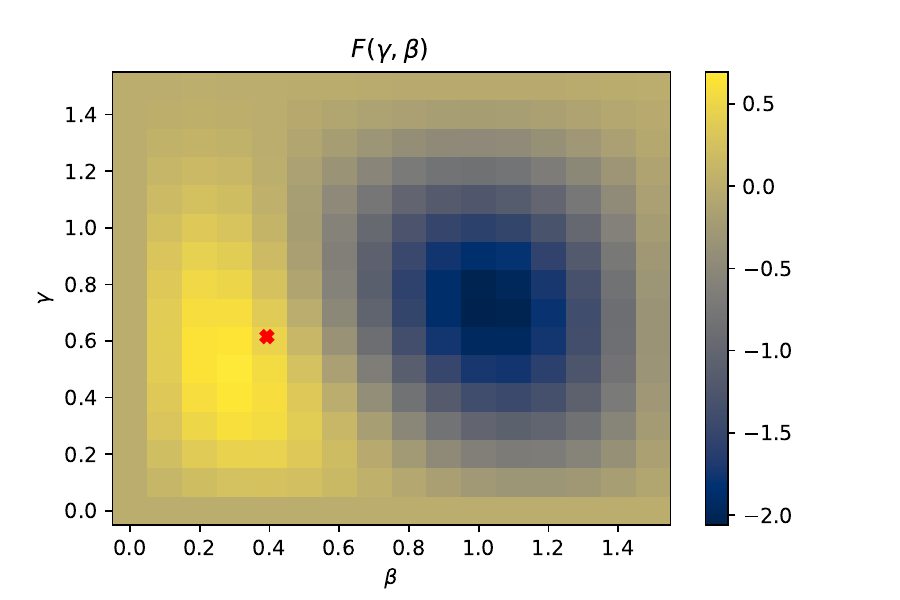}\includegraphics[width=0.49\linewidth]{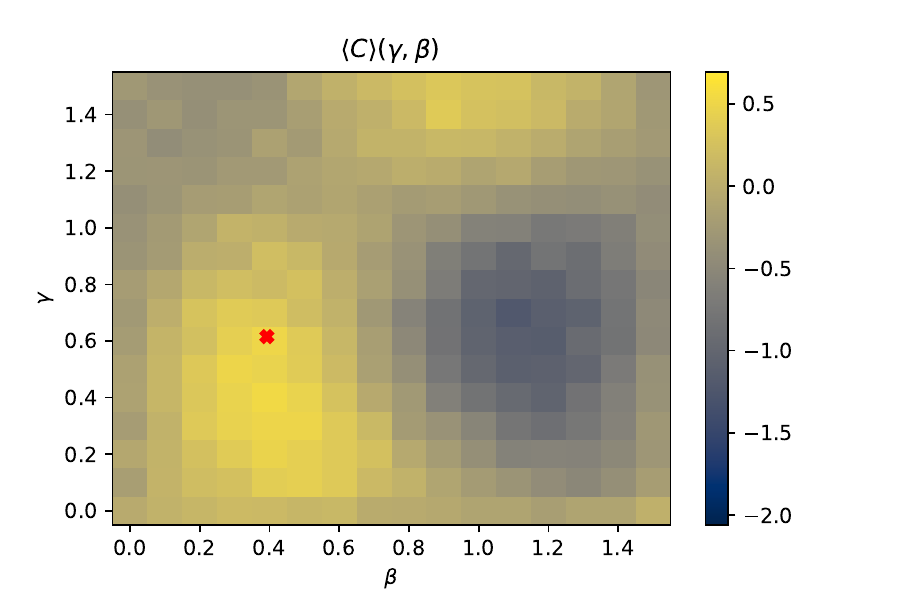}\label{fig:pred_compl}}
	\caption{Visualized results for QAOA parameter-estimate on instances a, b and c from Fig.~\ref{fig:param_instances}. The x- and y-axis represent values of the parameters $\beta$ and $\gamma$, respectively. The red cross marks the estimate in \eqref{eq:params}. The color encodes the expectation value (left) or the average (right) of the cut size. On the left, we mimic an ideal quantum device by evaluation of \eqref{eq:expth}. On the right, values are results from the quantum hardware. Here, every pixel represents the average taken over $1,024$ samples.}
	\label{fig:pred}
\end{figure}

Summarizing, our results show that the parameter estimate \eqref{eq:params}
performs well, even on instances that do not satisfy the assumptions
where it is provably optimal.
These results encourage us to use of QAOA without further parameter optimization
in the upcoming section where we combine shrinking with
QAOA to solve MaxCut instances too large to be handled by quantum hardware alone.

\subsection{Combining Shrinking with QAOA}\label{sec:exp_shrink}
In this section, we combine shrinking via the cycle relaxation, as described in Section~\ref{sec:algo}, with QAOA.
Again, instance sizes are kept small due to limitations of quantum hardware.
As in the previous section, graph topologies and weight distributions are motivated by spin glass physics.
We emphasize that all instances can be solved by classical
integer programming quickly.
Thus, the experiments in this section should be considered as a proof-of-principle rather than a performance-benchmark.

When computing the cycle relaxation on sparse graphs,
we model a complete graph and assign zero weights to edges not present in the sparse graph.
Thus, all odd-cycle inequalities belong to triangles and are of the form \eqref{eq:tr1}-\eqref{eq:tr4}.
We remark, that this relaxation has the same objective value as the sparse odd-cycle relaxation \eqref{eq:Objective}-\eqref{eq:bounds}
since the polyhedron defined by \eqref{eq:oddcycle}-\eqref{eq:bounds}
is a projection of the polyhedron \eqref{eq:tr1}-\eqref{eq:tr4}
along a direction orthogonal to the cost vector in \eqref{eq:Objective}.
We work with the dense cycle-relaxation for two reasons.
First, the dense formulation has variables for all vertex pairs, not only for edges present in the sparse graph.
This allows to shrink vertex pairs not connected by an edge in the sparse graph, \ie vertex pairs corresponding to an edge with zero weight in the dense model.
Second, this formulation can be implemented straightforwardly by simply enumerating all triangles.
Of course, this implementation is not efficient compared to state-of-the-art separation methods.
However, since we aim at a proof-of-concept and considered instances are small, 
this approach is sufficient for our numerical experiments.

Table~\ref{tab:shrink} summarizes the instance data and results.
\begin{table}
	\makebox[\textwidth][c]{
		\begin{tabular}{ccrrcrrrrrr}
			\toprule
			     Instance      &          Graph          & $|V|$ & $|E|$ &         Weights         & $r_{0}$ & \tb{$r_1$} & \tb{$r_{\floor*{|V|/2}}$} & \tb{$r_{|V|-6}$} & $r_{|V|-4}$ & \tb{$r_{|V|-2}$} \\ \midrule
			       2x3b        &     $2\times3$ grid     &     6 &     7 &         $\pm1$          &   53 \% & \tb{66 \%} &                \tb{98 \%} & \tb{53 \%}       &       90 \% &      \tb{100 \%} \\
			       k6b         &          $K_6$          &     6 &    15 &         $\pm1$          &   47 \% & \tb{68 \%} &                \tb{86 \%} & \tb{ 47 \%}      &       73 \% &      \tb{ 94 \%} \\
			       k6n         &          $K_6$          &     6 &    15 &      $\mathcal{N}$      &   44 \% & \tb{38 \%} &                \tb{79 \%} & \tb{ 44 \%}      &       56 \% &      \tb{ 94 \%} \\
			       3x3b        &     $3\times3$ grid     &     9 &    12 &         $\pm1$          &   57 \% & \tb{71 \%} &                \tb{90 \%} & \tb{ 83 \%}      &       90 \% &      \tb{ 96 \%} \\
			       k10b        &        $K_{10}$         &    10 &    45 &         $\pm1$          &   69 \% & \tb{71 \%} &                \tb{82 \%} & \tb{ 76 \%}      &       74 \% &      \tb{ 88 \%} \\
			       k10n        &        $K_{10}$         &    10 &    45 &      $\mathcal{N}$      &   51 \% & \tb{55 \%} &                \tb{79 \%} & \tb{ 73 \%}      &       74 \% &      \tb{ 98 \%} \\
			       4x4b        &     $4\times4$ grid     &    16 &    24 &         $\pm1$          &       - &     \tb{-} &                \tb{73 \%} & \tb{ 80 \%}      &       86 \% &      \tb{ 98 \%} \\
			   t2g10\_5555     & $10\times10$ torus grid &   100 &   200 &      $\mathcal{N}$      &       - &     \tb{-} &                  \tb{-  } & \tb{ 96 \%}      &       98 \% &      \tb{ 99 \%} \\
			ising2.5-100\_5555 &        $K_{100}$        &   100 &  4950 & $\mathcal{N}$, decaying &       - &     \tb{-} &                  \tb{-  } & \tb{ 98 \%}      &       99 \% &      \tb{ 99 \%} \\ \bottomrule
		\end{tabular}
		\caption{Instance data and results for evaluating the shrinking algorithm. $|V|$ and $|E|$ are the number of vertices and edges, respectively. In column \qm{Weights}, $\pm1$ abbreviates the uniform distribution on $\{-1,1\}$, $\mathcal{N}$ abbreviates the standard normal distribution and \qm{$\mathcal{N}$,~decaying} is a normal distribution decaying proportional to euclidean distance.
		In columns \tb{$r_{k}$, we state the achieved approximation ratio when shrinking \tb{$k$}} vertices \tb{and solving the shrunk problem by depth-1-QAOA on actual quantum hardware}.
		\tb{A \qm{-} indicates that the problem instance is too large for the quantum hardware used in our experiments.}
	}
		\label{tab:shrink}
	}
\end{table}
Concerning the runtime for the small instances with up to 16 vertices, solving the cycle relaxation took far less than one second while the total quantum runtime for 10,000 executions of QAOA was roughly 2 seconds.
We also considered two larger instances from literature, available in \cite{Biqmac_2007}.
The first is \qm{t2g10\_5555}, a $10\times 10$ toroidal grid with 100 vertices
and normal distributed weights.
The second is \qm{ising2.5-100\_5555}, the fully connected graph
$K_{100}$ on 100 vertices	
with weights decaying exponentially with distance.
Both instances are an order of magnitude larger than
the capabilities of current digital QC hardware.
However, they can still be handled by
classical branch-and-cut based on cycle relaxations.
\footnote{Although the instance \qm{ising2.5-100\_5555} is fully connected,
the cycle relaxation is still strong
due to the specific structured weight distribution
which suppresses weights of distant vertices.}
For both large instances, solving the cycle relaxation took less than one minute,
while the runtime \tb{for solving the shrunk instance} on the quantum machine  was roughly 5 seconds.

In Table~\ref{tab:shrink}, we compare the achieved approximation ratio using the real quantum computer without shrinking, $r_{|V|}$, to the approximation ratio when shrinking to four vertices, $r_4$.
In this work, we define the approximation ratio as
\begin{align}\label{eq:appratio}
	r\coloneqq \frac{\langle C \rangle-C_{\mathrm{min}}}{C_{\mathrm{max}}-C_{\mathrm{min}}} \in [0,1] \,,
\end{align}
where $\langle C \rangle$ is the average size of a returned cut, $C_{\mathrm{min}}$ and $C_{\mathrm{max}}$ are the minimum and maximum cut sizes, respectively.
For all of the small instances, we observe that shrinking significantly increases the approximation ratio.
For the larger instances \tb{with 16 and 100 vertices}, running QAOA was not possible without shrinking \tb{below 11 vertices} due to hardware limitations (\qm{-} in Table~\ref{tab:shrink}).
Notably, for the first instance, shrinking only two vertices yields an increase in approximation ratio of almost 40\ \%.

Fig.~\ref{fig:shrink_size} allows a more detailed analysis of the algorithm for the first two instances from Table~\ref{tab:shrink}.
Here, we run the algorithm with different settings.
First, we alter the number of shrunk vertices (x-axis in Fig.~\ref{fig:shrink_size}).
In general, the potential, \ie the best possible cut value, will degrade when shrinking more vertices
since there might not exist an optimum solution with the imposed
correlations.
This is the case if and only if the imposed correlations are not optimal.
In this case, even when the shrunk problem is solved to optimality, the recreated solution cannot be optimal.
On the other hand, shrinking more vertices reduces the problem size, which might lead to better solutions for the shrunk problem,
especially when using near-future quantum hardware.
The second setting we vary are the procedures for computing correlations (corresponding to different colors in Fig.~\ref{fig:shrink_size}).
This allows to investigate the influence of the correlation quality on the overall performance.
Two methods are used:
\begin{enumerate}
	\item Correlations inferred from the cycle relaxation, defined in \eqref{eq:cort} (blue lines in Fig.~\ref{fig:shrink_size}).
	\item All correlations are zero which results in shrinking random vertex pairs with $\sigma=1$ (red lines in Fig.~\ref{fig:shrink_size}). 
\end{enumerate}
Lastly, we apply different solution methods (corresponding to different line markes in Fig.~\ref{fig:shrink_size}) for the shrunk problem
in order to investigate the influence of the sub-problem solution quality on the overall performance.
The sub-problem is solved in four different ways:
\begin{enumerate}
	\item We solve the shrunk problem to optimality by integer programming (solid lines).
	This yields an upper bound on the performance of our algorithm.
	\item We solve the shrunk problem by a depth-1-QAOA with parameters predetermined by \eqref{eq:params}, executed $10,000$ times on an ideal quantum simulator (dashed-dotted lines).
	\item We solve the problem exactly as in 2, but with real
          quantum hardware instead of an ideal quantum simulator (dashed lines).
	\item We solve the shrunk problem randomly by flipping a coin for each vertex,
		\ie we assign each vertex to either partition with probability $1/2$ (dotted lines).
		When noise in the real quantum machine becomes large (called \emph{decoherence}),
		the quantum computer effectively performs the coin-flipping heuristic.
\end{enumerate}
Considering the approximation ratio when solving the shrunk problem to optimality (solid lines),
we note that shrinking with correlations from the cycle relaxation (blue) \tb{always returns optimal} solutions.
This means that the inferred correlations \eqref{eq:cort} are indeed optimal.
\tb{Thus, if sub-optimal solutions to the original problem are retrieved when running QAOA instead of an exact algorithm for the shrunk problem, the degrade in solution quality must be attributed purely to QAOA.}
As expected, the potential decreases when shrinking randomly (red solid lines).
Here, correlations are sub-optimal which means that an optimal solution with the imposed correlations does not exist.

Now, we turn to the approximation ratio when solving the shrunk problem on the ideal quantum simulator (dashed-dotted lines).
As expected, when shrinking with optimal correlations inferred from the cycle relaxation (blue), the approximation ratio monotonically increases with vertex deletions.
A smaller sub-problem can be better approximated by the quantum algorithm.
Rather interestingly, an increase in approximation ratio when deleting more vertices can sometimes be observed even when shrinking randomly (red).
Here, the better approximability of the sub-problem by the quantum algorithm over-compensates the degrade in potential caused by shrinking sub-optimally.
\tb{These results highlight the strong limitation of QAOA at depth $p=1$.
  	Indeed, other studies suggest that $p\gtrapprox 11$ is required for QAOA to outperform classical algorithms, which is intractable for current quantum hardware platforms~\cite{lykov2022,Farhi_2022}.}

Presumably, the most interesting case is when the shrunk problem is solved on the real quantum machine (dashed lines).
The approximation ratio qualitatively follows the ideal simulation for both, optimal correlations (blue) and random shrinking (red).
However, as expected, the quantum hardware always performs worse than the ideal simulation.
Of course, this is due to noise effects in real hardware.
Notably, we observe a maximum approximation-ratio at 2 deleted vertices in both instances when shrinking randomly (red).
Here, the trade-off between potential degrade due to sub-optimal shrinking and performance gain due to increased approximability is optimal.
Finally, we note a significant advantage of QAOA over the coin-flipping-heuristic (dotted lines), when shrinking one or more vertices.
Without shrinking, \ie zero shrunk vertices, we observe that noise effects take over, and QAOA effectively flips a coin for every vertex.

\begin{figure}[]
	\centering
	\subfloat[Instance 2x3b.]{\includegraphics[width=0.7\linewidth]{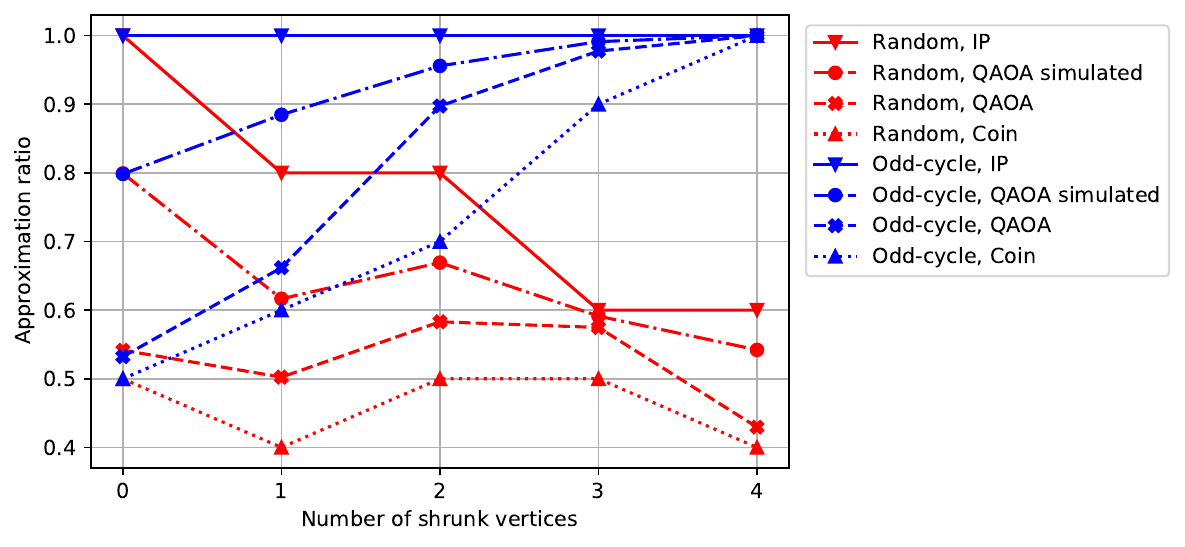}\label{fig:shrink_vs_val_2x3}}\\
	\subfloat[Instance k6b.]{\includegraphics[width=0.7\linewidth]{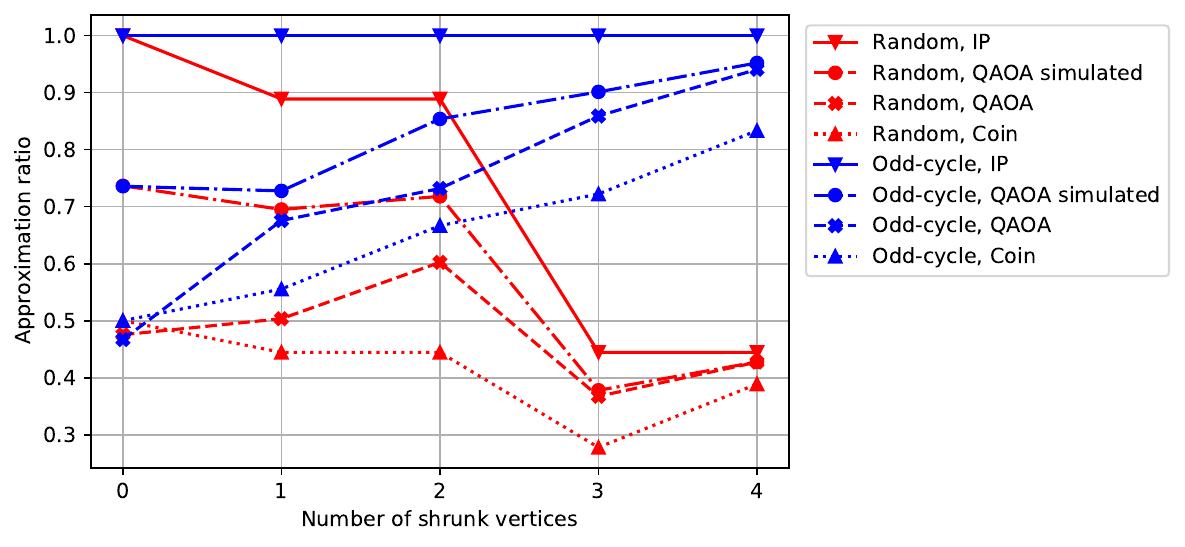}\label{fig:shrink_vs_val_sk6}}\\\
	\caption{Results for evaluating the shrinking-algorithm with different setting. Shown is the approximation ratio, defined in \eqref{eq:appratio}, versus the number of shrunk vertices.
		In the legend, \qm{Random} stands for random shrinking while \qm{Odd-cycle} stands for correlations given by \eqref{eq:cort}. \qm{IP}, \qm{QAOA simulated}, \qm{QAOA} and \qm{Coin} refer to different sub-problem solution methods as discussed in the main text body.}
	\label{fig:shrink_size}
\end{figure}

Corresponding figures for instances k6n and 3x3b appear in Appendix~\ref{app:exp_compl}.
The two key observations are the same as in Fig.~\ref{fig:shrink_size}: First, the
cycle relaxation yields optimal correlations.
Second, maxima in the approximation ratio for QAOA on real hardware are present when shrinking randomly.
We do not provide figures for the remaining instances from Table~\ref{tab:shrink}
because conclusions are similar to the previously discussed instances.

Summarizing, the results indicate that the proposed shrinking method
can indeed reduce problem size significantly without degrading solution quality:
In our experiments, shrinking with the cycle relaxation always
preserved optimality, \ie inferred correlations are optimal.
This is beneficial for quantum computation on noisy hardware of limited size.
When shrinking optimally, the performance of QAOA always increased with the number of deleted vertices.
\tb{From this, we conclude that QAOA at depth $p=1$ is strongly limited in finding high-quality solutions, even for small instances.}
Interestingly, when shrinking sub-optimally, we observed an optimal
trade-off between potential-loss and increased approximability of the reduced problem.
By combining linear programming with QAOA,
we were able to solve MaxCut instances from literature an order of magnitude larger
than the capabilities of current digital quantum hardware.
Although our experiments yield only a proof-of-principle, they encourage the combination
of quantum algorithms with classical branch-and-cut,
when QC approaches regimes where it outperforms classical heuristics.

\subsection{\tb{Comparison of LP Shrinking to RQAOA and Goemans-Williamson}}\label{sec:exp_rqaoa}
As mentioned earlier, any method for obtaining variable correlations can be substituted in the shrinking procedure from Section~\ref{sec:algo} by substituting Equation~\eqref{eq:cort} appropriately.
In particular, using correlations obtained by QAOA and recalculating these correlations after every shrinking step, recovers the well-known RQAOA~\cite{Bravyi_2020}.
In this section, we compare the quality of LP-based shrinking procedure to RQAOA.
To this end, we shrink random MaxCut instances via both, RQAOA and the LP-based method proposed in this work.

For a fair comparison, we recalculate the cycle relaxation after every shrinking step,
analogous to RQAOA, which continuously updates QAOA correlations.
QAOA correlations are obtained by evaluating~\eqref{eq:expth} at parameters estimated by~\eqref{eq:params}.
Although estimated parameters can be sub-optimal in principle, the results in Sec.~\ref{sec:exp_params} reveal a good performance of the estimated parameters  on random MaxCut instances, \cf Table~\ref{tab:param_pred}.

As instances, we choose the same Erdős–Rényi graphs as in Sec.~\ref{sec:exp_params}.
That is, we consider graphs on $100$ nodes with edge densities $d\in\set{0.05,0.1,0.15,0.2}$.
Again, the motivation is to cover densities below and above the critical density of $d^*=0.1$.
Considering runtimes, solving a single LP relaxation takes $\sim 60$~s while calculating RQAOA correlations is in the order of $~1$~s.

In Fig.~\ref{fig:rqaoa} we visualize the results.
Analogous to the previous section, we plot the number of shrunk vertices versus the average approximation ratio when solving the shrunk problem to optimality.
This allows us to compare the LP-based shrinking algorithm to RQAOA.
First, when considering low density instances with $d=0.05$ (blue lines in Fig.~\ref{fig:rqaoa}), we observe that LP-shrinking (dots) yields higher average approximation ratios than RQAOA (crosses) for all numbers of shrunk vertices.
LP shrinking constantly yields approximation ratios above 99.7 \% whereas the RQAOA approximation ratio decreases with increasing number of shrunk vertices below 96.5 \%.
For the critical density $d=0.1$ (orange) LP performs slightly worse than for $d=0.05$, while at the same time, RQAOA performs slightly better, thus reducing the advantage of LP.
However, LP still outperforms RQAOA for all numbers of shrunk vertices.
For the high density instances with $d=0.15$ (green), the performance of LP decreases further, in particular when shrinking more than half of the vertices.
At the same time, the performance of RQAOA increases compared to density $d=0.1$.
However, LP still outperforms RQAOA except when shrinking more than 70 vertices.
This trend continues.
For higher densities, the approximation ratio of LP shrinking decreases more rapidly with the number of shrunk vertices.
At the same time, RQAOA achieves larger approximation ratios for higher densities.
Still, at a density of $d=0.2$ (red), LP outperforms RQAOA up to shrinking 50 vertices.
Our results are in agreement with previous studies,
which revealed that the cycle relaxation is strong
on sparse instances~\cite{Charfreitag_2022,Rehfeldt_2022,Rendl2010,Bonato_2014}.

For completeness, we also compare RQAOA and LP-based shrinking to the
well-known classical Goemans-Williamson algorithm, which uses positive
semidefinite optimization.
To this end, we perform the randomized rounding algorithm 
for every instance and calculate the average approximation ratio for each ensemble (solid lines).
Both, LP-shrinking and RQAOA outperform the Goemans-Williamson algorithm for almost all densities and numbers of shrunk vertices.
The only exceptions are the high-density instances with $d=0.2$ (red) where Goemans-Williamson outperforms LP when shrinking more than 60 vertices.

In summary, we observe that linear programming yields high-quality
correlations for instances with densities of up to 20~\%, which
covers a very wide range of problem classes.
Even on denser instances, RQAOA outperforms LP only when shrinking more than half of the vertices.
Similar to Section~\ref{sec:exp_shrink}, we conclude that depth-1-QAOA often produces sub-optimal correlations
and larger QAOA depths are required to outperform classically obtained correlations.
However, larger QAOA depths complicates parameter optimization, potentially remedying the runtime advantage over classical methods.

\begin{figure}[]
	\centering
	\subfloat{\includegraphics[width=0.8\linewidth]{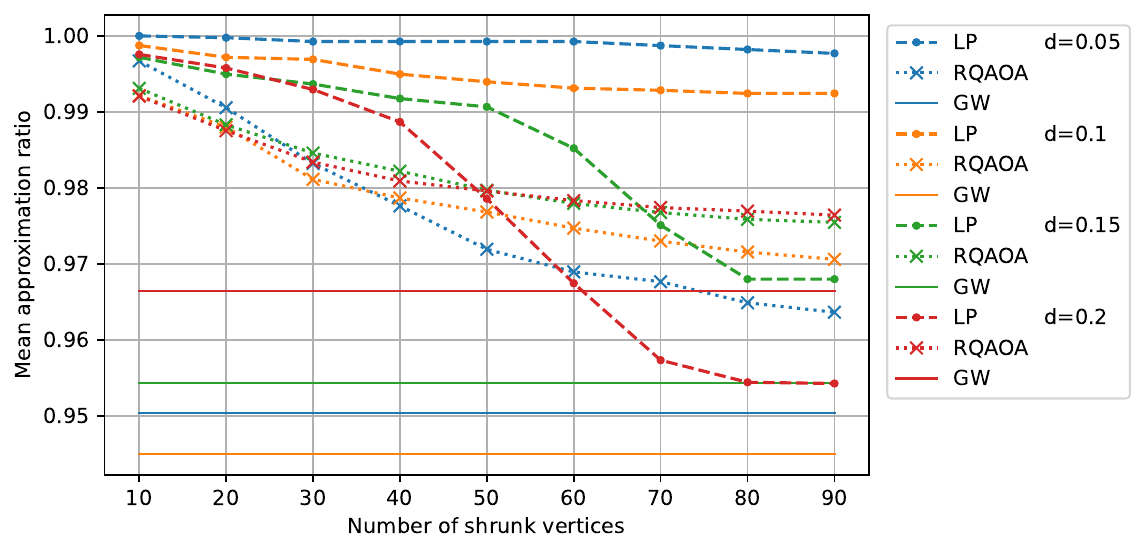}\label{fig:shrink_vs_val_2x3}}\\
	\caption{\tb{Comparison of LP-based shrinking to RQAOA and Goemans-Williamson. We plot the number of shrunk vertices versus the average approximation ratio. 
			Instances are Erdős–Rényi random graphs on 100 vertices with different densities $d\in \set{0.05,0.1,0.15,0.2}$.
			 Data points are averaged over 20 instances.
		 	For completeness, we also include the average approximation ratio for the Goemans-Williamson rounding algorithm (solid lines).}}
	\label{fig:rqaoa}
\end{figure}

	\section{Conclusion and Outlook}\label{sec:concl}
In this work, we proposed a hybrid quantum-classical heuristic algorithm for the maximum cut problem.
The guiding idea is to combine the ability of classical integer programming 
to solve large relaxations with the \tb{promise} of quantum optimization to find high-quality solutions quickly.
The algorithm is particularly well-suited for integration in classical integer programming for two reasons:
First, it relies on relaxation solutions and, second, it helps to avoid enumeration in classical branch-and-cut.
More specifically, we use the well-known technique of graph shrinking to reduce problem size such that
it can be handled by quantum hardware with limited resources.
To this end, we shrink according to an optimum of the cycle relaxation of the cut polytope. 
Furthermore, we improved the applicability
of \tb{depth-1} QAOA for weighted MaxCut, which builds the quantum part in the hybrid algorithm,
by deriving optimal parameters for instances on regular and triangle-free graphs
with weights following a binary distribution.
This result motivates a parameter estimate for arbitrary instances.

Our experiments give a proof-of-principle for the applicability of the proposed methods.
Although all considered instances can be solved in reasonable time
by purely classical algorithms,
the results indicate a potential benefit for integer programming, when
QC improves.
First, the proposed QAOA parameter estimate
works well in practice.
This improves the applicability of \tb{depth-1} QAOA since
it renders classical parameter optimization unnecessary.
Second, when combining shrinking with QAOA,
we observed that linear programming can shrink
problem size significantly without losing optimality.
\tb{In particular, linear programming outperforms depth-1 RQAOA on sparse instances.}
Furthermore, we observed that shrinking is indeed beneficial
for QAOA when executed on current quantum hardware.
\tb{Our results indicate that depth-1 QAOA is strongly limited in finding optimum solutions and quantum algorithms
of higher quality are needed for practical utility.}
Of course, a more thorough evaluation on a wider range of instances is needed
to investigate the performance of our method in more detail.

A direction of future research is the incorporation of characteristics
of quantum algorithms and quantum hardware in the process of shrinking.
It is known, that QAOA performs worse on certain types of graphs, \eg bipartite graphs.
From a hardware perspective, sparse graphs simplify the implementation of QAOA.
In the process of shrinking, one can try to avoid or produce such specific graph characteristics.
Moreover, other techniques for deriving (optimal) correlations
exist in literature. Their performance in our framework needs to be further studied.
\tb{Extending the set of test instances allows deeper insights into the instance-dependence of the proposed algorithm.}
Another field of ongoing research is the quantum part which
may be replaced by \tb{higher-depth} QAOA \tb{or variants like warm-start QAOA} or even by different algorithmic paradigms like quantum annealing.

	\bibliographystyle{unsrturl}  
	\bibliography{bib_cqaoa}
	\appendix
	\newpage
\section{Additional Figures}\label{app:exp}
\subsection{Parameter Estimate Evaluation}\label{app:exp_param}
\begin{figure}[h]
	\centering
	\subfloat[Instance d.]{\includegraphics[width=0.48\linewidth]{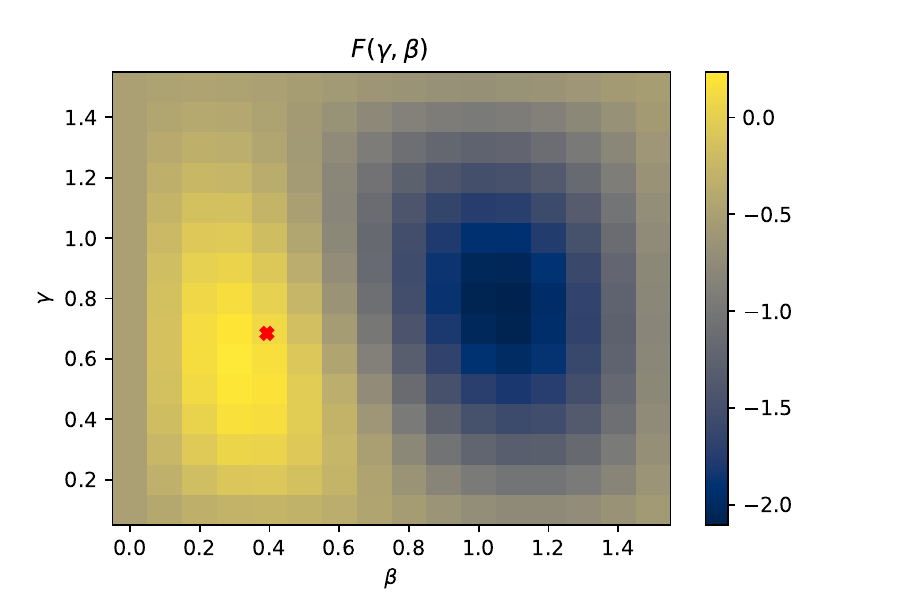}\includegraphics[width=0.48\linewidth]{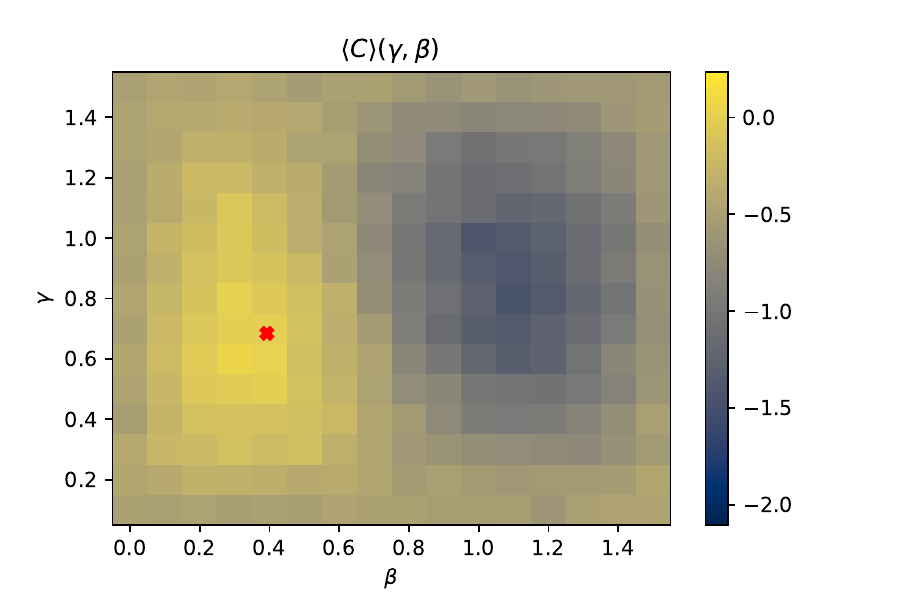}\label{fig:pred_chord}}\\
	\subfloat[Instance e.]{\includegraphics[width=0.48\linewidth]{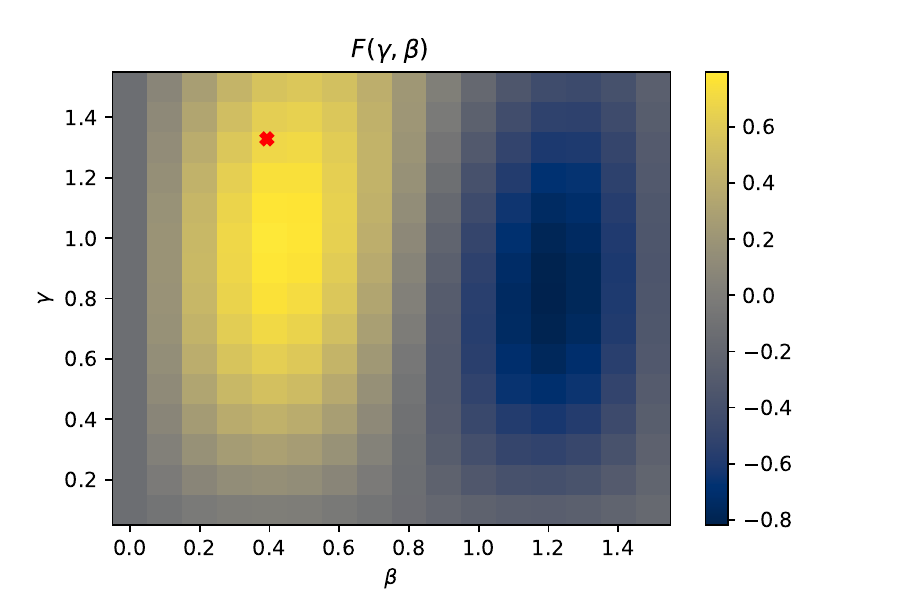}\includegraphics[width=0.48\linewidth]{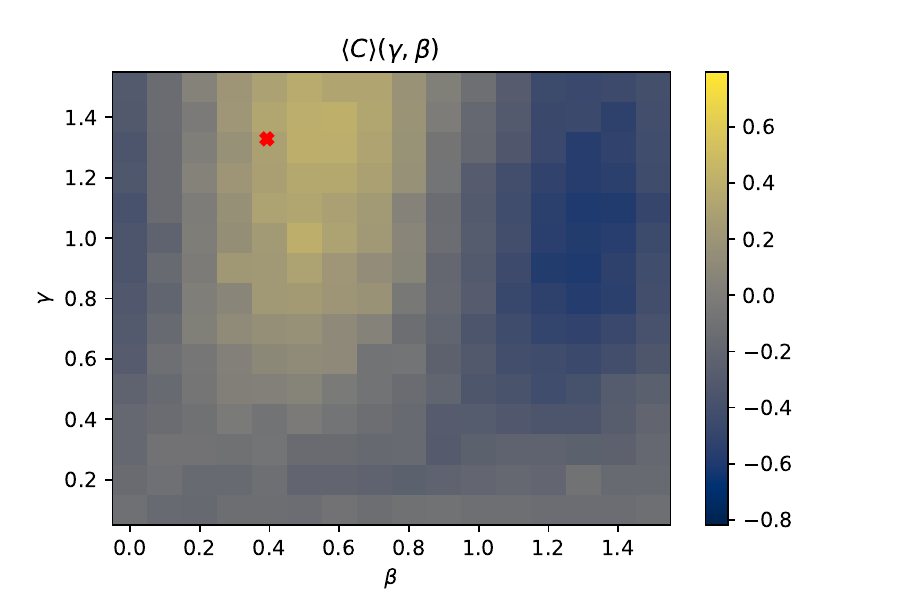}\label{fig:pred_compln}}\\

	\subfloat[Instance f.]{\includegraphics[width=0.48\linewidth]{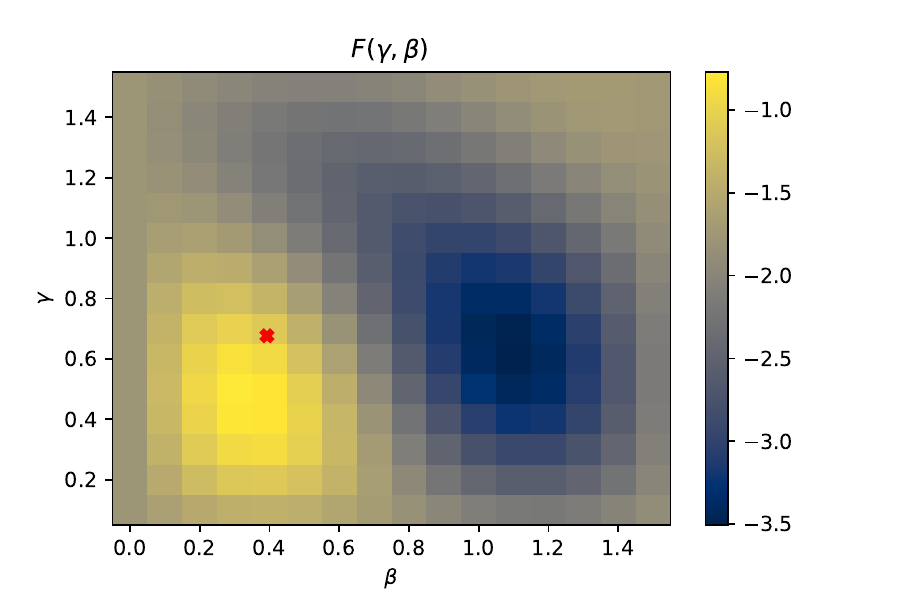}\includegraphics[width=0.48\linewidth]{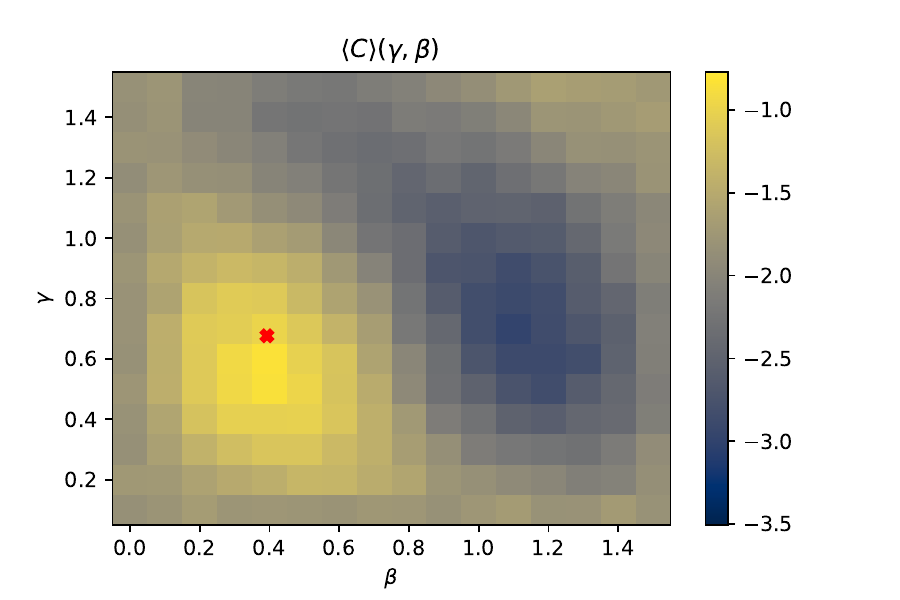}\label{fig:pred_chordn}}
	\caption{Visualized results for QAOA parameter-estimate on instances d, e and f from Fig.~\ref{fig:param_instances}. Compare Fig.~\ref{fig:pred}.}
	\label{fig:pred3}
\end{figure}
\newpage
\subsection{Combining Shrinking with QAOA}\label{app:exp_compl}
\begin{figure}[h]
	\centering
	\subfloat[Instance k6n from Table~\ref{tab:shrink}.]{\includegraphics[width=0.7\linewidth]{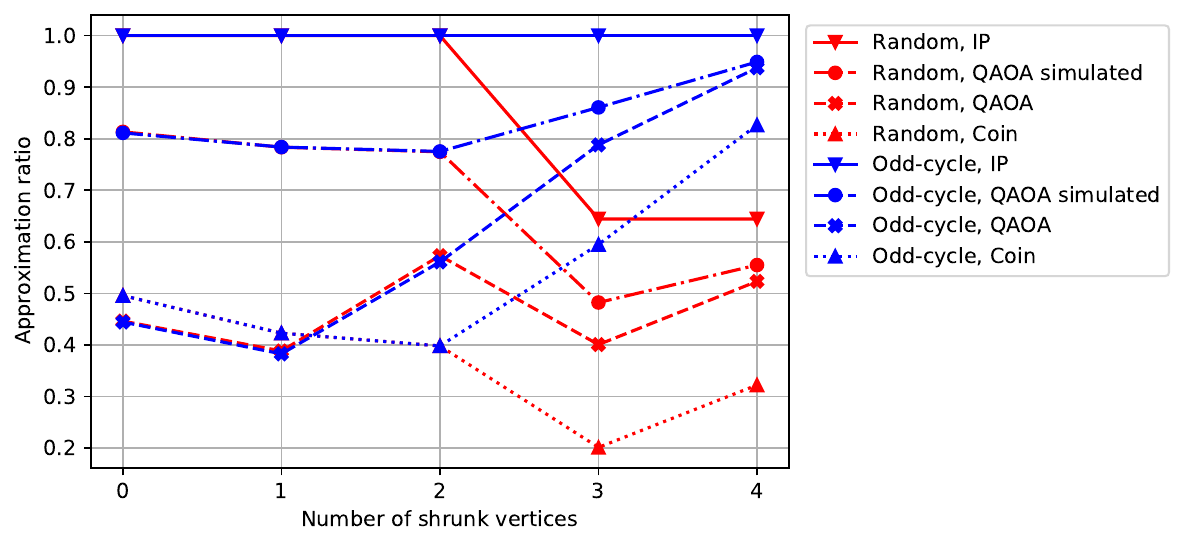}\label{fig:shrink_vs_val_sk6n}}\\\
	\subfloat[Instance 3x3b from Table~\ref{tab:shrink}.]{\includegraphics[width=0.7\linewidth]{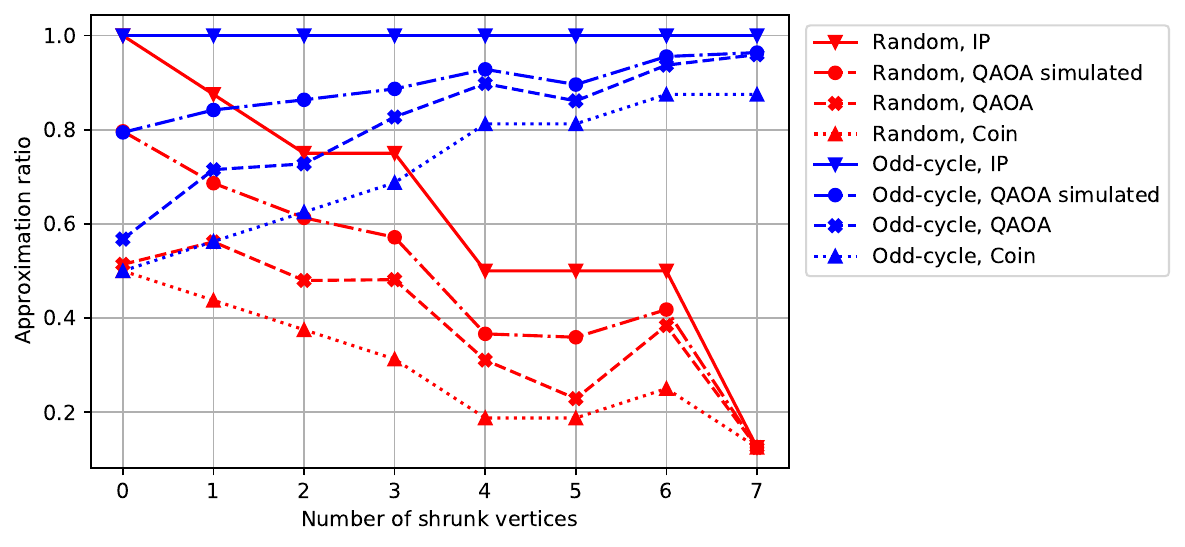}\label{fig:shrink_vs_val_3x3}}\\\
	\caption{Further results for evaluating the shrinking-algorithm with different setting. Compare Fig.~\ref{fig:shrink_size}.}
	\label{fig:shrink_size2}
\end{figure}

	\section*{Acknowledgements}
	This research has been supported
	by the Bavarian Ministry of Economic Affairs, Regional Development
	and Energy with funds from the Hightech Agenda Bayern and
	by the Federal Ministry for	Economic Affairs and Climate Action on the basis
	of a decision by the German Bundestag through project QuaST.
\end{document}